\def\version{September 20, 2017}
\def\macrosPb{}
\def\macrosHarxiv{}
  \DeclareMathAlphabet{\mathcal}{OMS}{cmsy}{m}{n}
\def\UseSection{
        \numberwithin{equation}{section}
	\theoremstyle{plain}
        \newtheorem{theorem}    {Theorem}[section]
        \DefineTheorems 
}
\def\DefineTheorems{
	
	\newtheorem{lemma}      [theorem] {Lemma}
	
	\newtheorem{prop}       [theorem] {Proposition}
	
	\newtheorem{cor}        [theorem] {Corollary}

	\theoremstyle{definition}
	\newtheorem{defn}       [theorem] {Definition}

	\theoremstyle{definition}

}
\newcommand{\bt}   {\begin{theorem}}
\newcommand{\et}   {\end  {theorem}}
\newcommand{\bl}   {\begin{lemma}}
\newcommand{\el}   {\end  {lemma}}
\newcommand{\bp}   {\begin{prop}}
\newcommand{\ep}   {\end  {prop}}
\newcommand{\bc}   {\begin{cor}}
\newcommand{\ec}   {\end  {cor}}
\newcommand{\bd}   {\begin{defn}}
\newcommand{\ed}   {\end  {defn}}
\newcommand{\ba}   {\begin{array}}
\newcommand{\ea}   {\end  {array}}
\newcommand{\be}   {\begin{enumerate}}
\newcommand{\ee}   {\end  {enumerate}}
\newcommand{\bi}   {\begin{itemize}}
\newcommand{\ei}   {\end  {itemize}}
\def\eq#1\en{\begin{equation}#1\end{equation}}  
\def\eqsplit#1\ensplit{
	\begin{equation}\begin{split}#1\end{split}\end{equation}
	}
\def\eqalign#1\enalign{
	\begin{align}#1\end{align}
	}
\def\eqmul#1\enmul{
	\begin{multline}#1\end{multline}
	}
\newcommand{\eqarrstar} {\begin{eqnarray*}} 
\newcommand{\enarrstar} {\end{eqnarray*}} 
\newcommand{\eqarray}   {\begin{eqnarray}} 
\newcommand{\enarray}   {\end{eqnarray}} 
\newcommand{\nnb}	{\nonumber \\} 
\newcommand{\lbeq}[1]  {\label{e:#1}}
\newcommand{\refeq}[1] {\eqref{e:#1}}    
\newcommand{\labelcounter}[2]{{%
	\stepcounter{#1}
	\protected@write\@auxout{}%
	{\string\newlabel{#2}{{\csname the#1\endcsname}{\thepage}}}%
	{\ref{#2}}
	}}
\newcommand{\Nbold} {{\mathbb N}}
\newcommand{\Rbold} {{\mathbb R}}
\newcommand{\Zbold} {{\mathbb Z}}
\newcommand{\Bcal}   {\mathcal{B}} 
\newcommand{\Ccal}   {\mathcal{C}} 
\newcommand{\Dcal}   {\mathcal{D}}
\newcommand{\Ical}   {\mathcal{I}} 
\newcommand{\Kcal}   {\mathcal{K}} 
\newcommand{\Lcal}   {\mathcal{L}} 
\newcommand{\Ncal}   {\mathcal{N}} 
\newcommand{\Pcal}   {\mathcal{P}}
\newcommand{\Scal}   {\mathcal{S}} 
\newcommand{\Ucal}   {\mathcal{U}} 
\newcommand{\Vcal}   {\mathcal{V}} 
\newcommand{\Wcal}   {\mathcal{W}}
\newcommand{\Zd}    {{ {\Zbold}^d }}
\newcommand{\spose}[1] {{\hbox to 0pt{#1\hss}} }
\newcommand{\ltapprox} {\mathrel{\spose{\lower 3pt\hbox{$\mathchar"218$}}
 \raise 2.0pt\hbox{$\mathchar"13C$}}}
\newcommand{\gtapprox} {\mathrel{\spose{\lower 3pt\hbox{$\mathchar"218$}}
 \raise 2.0pt\hbox{$\mathchar"13E$}}}
\definecolor{bw}{RGB}{240, 120, 0}
\definecolor{at}{rgb}{0.0, 0.5, 0.0} 
\newcommand{\LTbar}{{\rm loc}}
\newcommand{\LT}{{\rm Loc}  }
\newcommand{\DV}{\Dcal}
\newcommand{\DVa}{\alpha}
\renewcommand{\to} {\rightarrow}
\newcommand{\R}{\Rbold}
\newcommand{\Z}{\Zbold}
\newcommand{\N}{\Nbold}
\newcommand{\1}{\mathbbm{1}}
\newcommand{\psib}{\bar\psi}
\newcommand{\Ex}{\mathbb{E}}
\newcommand{\chicCov}{{\chi}}
\newcommand{\pair}[1]{\langle #1 \rangle}
\newcommand{\cgam}{\gamma}
\newcommand{\pt}{{\rm pt}}
\newcommand{\lambdapt}{\lambda_{\mathrm{pt}}}
\newcommand{\qpt}{q_{\mathrm{pt}}}
\newcommand{\h}{\mathfrak{h}}
\newcommand{\Iint}{\mathbb{I}}
\newcommand{\domRG}{\mathbb{D}}
\newcommand{\pp}{a}
\newcommand{\qq}{b}
\newcommand{\half}{\textstyle{\frac 12}}
\newcommand{\ddp}[2]{\frac{\partial #1}{\partial #2}}
\newcommand{\epdV}{\bar{\epsilon}}
\newcommand{\phib}{\bar\phi}
\newcommand{\Kspace}{\Kcal}
\newcommand{\CKspace}{\Ccal\Kspace}
\DeclareMathOperator{\Loc}{Loc} 
  \newcommand{\texorpdfstring}[2]{#1}
\newcommand{\chiL}{{\vartheta}}
\renewcommand{\DVa}{t}
\renewcommand\footnotemark{}
\newcommand{\multia}{a}
\DeclareMathOperator{\PT}{PT}
\DeclareMathOperator{\loc}{{\rm loc}}
\title {
   Critical two-point function for long-range
   \\
   $O(n)$ models below the upper critical dimension
}
\author{
    Martin Lohmann \and Gordon Slade
    \and Benjamin C.\ Wallace\thanks{Department of Mathematics,
        University of British Columbia,
        Vancouver, BC, Canada V6T 1Z2.
        E-mail:  {\tt marlohmann@math.ubc.ca}, {\tt slade@math.ubc.ca}, {\tt bwallace@ist.ac.at}.}}
\date\version
\newcommand{\Ncallab}{c}
\renewcommand{\chicCov}{{\vartheta}}
\newcommand{\gLfix}{\bar{s}}
\newcommand{\CRG}{C_{\rm RG}}
\begin{document}

\maketitle

\begin{abstract}
    We consider the $n$-component $|\varphi|^4$ lattice spin model ($n \ge 1$)
    and the weakly self-avoiding walk ($n=0$) on $\Z^d$, in dimensions $d=1,2,3$.
    We study long-range models based on the fractional Laplacian,
    with spin-spin interactions or walk step probabilities decaying
    with distance $r$ as $r^{-(d+\alpha)}$ with $\alpha \in (0,2)$.
    The upper critical dimension is $d_c=2\alpha$.  For $\epsilon >0$,
    and $\alpha = \frac 12 (d+\epsilon)$, the dimension $d=d_c-\epsilon$ is below
    the upper critical dimension.
    For small $\epsilon$, weak coupling, and all integers $n \ge 0$, we
    prove that the two-point function at the critical point
    decays with distance as $r^{-(d-\alpha)}$.
    This ``sticking'' of the critical exponent at its mean-field value was first
    predicted in the physics literature
    in 1972.
    Our proof is based on a rigorous renormalisation group method.  The
    treatment of observables differs from that used
    in recent work on the
    nearest-neighbour 4-dimensional
    case, via our use of a cluster
    expansion.
\end{abstract}

\ifdefined\macrosS
\fi

\section{Introduction and main result}

Broadly speaking, the mathematical understanding of critical phenomena
for spin systems has
progressed in dimension $d=2$, where exact solutions and SLE are important tools;
in dimensions $d>4$, where infrared bounds and the lace expansion are useful;
and in dimension $d=4$, where renormalisation group (RG) methods have been applied.
The physically most important case of
$d=3$ is more difficult, and mathematical methods are scarce.

In the physics literature, the $\epsilon$-expansion was introduced to study
non-integer dimensions slightly below $d=4$.  An alternate approach is to
consider long-range models, which change the upper critical dimension from $d_c=4$
to a lower value $d_c=2\alpha$ with $\alpha\in (0,2)$.  By choosing $d=1,2,3$
and $\alpha = \frac 12 (d+\epsilon)$ with small $\epsilon$, it is possible to
study integer dimension $d$ which is slightly below the upper critical dimension
$2\alpha = d+\epsilon$.  In this paper, we consider $n$-component spins and the weakly
self-avoiding walk in this long-range context, and prove that the critical two-point function
has mean-field decay $r^{-(d-\alpha)}$ also \emph{below} the upper critical dimension.
Our method involves a RG analysis in the vicinity of a non-Gaussian
fixed point.

\subsection{Introduction}

We consider long-range $O(n)$ models on $\Zd$ for integers $n \ge 0$ and dimensions $d=1,2,3$.
The case $n=0$ is the continuous-time weakly self-avoiding walk, and the case $n \ge 1$
is the $n$-component $|\varphi|^4$ lattice spin model.
For $n=0$ the underlying random walk model
takes steps of length $r$ with probabilities decaying as $r^{-(d+\alpha)}$
with $\alpha \in (0,2)$,
and for $n \ge 1$ the spin-spin interaction in the Hamiltonian has that same decay.
More precisely, the models are based on the fractional Laplacian $(-\Delta)^{\alpha/2}$,
whose kernel decays at large distance as $r^{-(d+\alpha)}$.

The upper critical dimension is predicted to be $d_c=2\alpha$ for all $n \ge 0$.
Thus, for $\alpha < \frac{d}{2}$, mean-field behaviour is predicted; this has been proved
for self-avoiding walk, for the Ising model, for the 1-component $\varphi^4$ model,
and for other models \cite{AF86,CS15,Heyd11,HHS08}.
In the physics literature,
it is observed that below the upper critical dimension the critical two-point function
continues to exhibit the mean-field decay $r^{-(d-\alpha)}$
for $\alpha \in (\frac d2,2-\eta)$,
and then crosses over to $r^{-(d-2+\eta)}$ decay for $\alpha \in (2-\eta,2)$.
Here $\eta$ is the exponent for the nearest-neighbour model; for $n=1$ this is
$\eta = \frac 14$ for $d=2$ \cite{Wu66}, and a recent estimate
for $d=3$ is $\eta = 0.03631(3)$ \cite{EPPRSV14}.
The earliest
paper to elucidate the critical behaviour of long-range models
is \cite{FMN72}, with \cite{SYI72} roughly contemporaneous and \cite{Sak73} providing
further development.
A very recent paper which analyses the crossover for the
two-point function in detail for $n=1$ is \cite{BRRZ17}.
At the crossover, when $\alpha=\alpha_*=2-\eta$,
a logarithmic correction is predicted, with overall decay $\frac{1}{r^{d-\alpha_*}}\frac{1}{\log r}$
\cite{BPR-T14,BRRZ17}.
The relationship with conformal invariance is explained in \cite{PRRZ16}.

Let $n = 0,1,2,\ldots$; $d=1,2,3$; and $\alpha = \frac 12 (d+\epsilon)$.
We use a rigorous RG argument to
prove that for small $\epsilon >0$, the critical two-point function has decay
$r^{-(d-\alpha)}$.  This proves the ``sticking'' of the critical exponent
at its mean-field value, for $\alpha$ slightly above $\frac d2$, or equivalently,
for $d$ slightly below the upper critical dimension $d_c=2\alpha$.
Our proof extends recent results and methods used to study the $\epsilon$-expansion
for the critical exponents for the susceptibility and specific heat of the long-range models
\cite{Slad17}.  It also relies on results and techniques developed to study related problems
for the 4-dimensional nearest-neighbour models \cite{BS-rg-step,BBS-saw4,ST-phi4}.
However, our treatment of observables differs from that used in the 4-dimensional case,
via our application of a cluster expansion.

Earlier mathematical work which applies RG methods to long-range
models includes the construction of global RG trajectories
for $n=0$ and $d=3$ \cite{MS08},
and for a continuum version of the $n=1$ model in
\cite{BMS03,Abde07}.  These references do not study critical exponents.  The exponents
for critical correlations in a certain hierarchical version of the model, for $d=3$ and $n=1$,
are computed in \cite{ACG13}.
For a closely related continuum model with $n=1$ in dimensions $d=2,3$, a proof of the
``sticking'' of the critical exponent for the critical two-point function
was announced in a 2013 lecture
\cite{Mitt13}.

\subsection{Fractional Laplacian}

The models we study are defined in terms of the fractional Laplacian.
We now define the fractional Laplacian and list some of its properties.
Further details can be found in \cite[Sections~\ref{alpha-sec:fL}--\ref{alpha-sec:frd}]{Slad17}.

Let $d \ge 1$ and $\alpha \in (0,2)$.  We write $|x|$ for the Euclidean norm of
$x\in \Z^d$.
Let $J$ be the $\Zd \times \Zd$ matrix with $J_{xy}=1$ if $|x-y|=1$,
and otherwise $J_{xy}=0$.  Let $I$ denote the identity matrix.  The lattice
Laplacian on $\Zd$ is
$\Delta = J-2dI$.
For $k = (k_1,\ldots,k_d) \in [-\pi,\pi]^d$, let
\begin{equation}
\lbeq{lambdak}
        \lambda(k) = 4 \sum_{j=1}^d \sin^2 (k_j/2) = 2 \sum_{j=1}^d (1-\cos k_j)
        .
\end{equation}
The matrix element $-\Delta_{x,y}$ can be written as the Fourier integral
\begin{equation}
        -\Delta_{x,y} = \frac{1}{(2\pi)^d}\int_{[-\pi,\pi]^d} \lambda(k) e^{ik\cdot (x-y)}dk.
\end{equation}
The fractional Laplacian is the matrix $(-\Delta)^{\alpha/2}$ defined by
\begin{equation}
\lbeq{QbetaFT}
        (-\Delta)^{\alpha/2}_{x,y}
        =
        \frac{1}{(2\pi)^d}\int_{[-\pi,\pi]^d} \lambda(k)^{\alpha/2} e^{ik\cdot ( x-y)}dk.
\end{equation}

For $|x-y| \to \infty$, the fractional Laplacian decays as
\begin{equation}
\lbeq{fracLapdecay}
        -(-\Delta)^{\alpha/2}_{x,y} \asymp |x-y|^{-(d+\alpha)}
\end{equation}
(see \cite[Lemma~\ref{alpha-lem:fracLapdecay}]{Slad17},
or \cite[Theorem~5.3]{BCT15} for a more precise
and more general statement).
Here, and in the following, we write $a \asymp b$ to denote the existence of $c>0$
such that $c^{-1} b \leq a \leq c b$.
For $d \ge 1$, $\alpha \in (0,2\wedge d)$, $\bar m^2>0$,
 $m^2 \in [0,\bar m^2]$, and $x \neq 0$, the resolvent obeys
\label{lem:covbd}
\begin{equation}
\lbeq{resolventbd}
                ((-\Delta)^{\alpha/2}+m^2)^{-1}_{0,x}
                \le
                c \frac{1}{|x|^{d-\alpha}} \frac{1}{1+m^4|x|^{2\alpha}},
\end{equation}
with $c$ depending on $d,\alpha,\bar m^2$ (see \cite[Lemma~\ref{alpha-lem:covbd}]{Slad17}).
For $m^2=0$, an asymptotic formula
\begin{equation}
\lbeq{resolventasy}
        ((-\Delta)^{\alpha/2})_{0,x}^{-1}\sim c_{d,\alpha} \frac{1}{|x|^{d-\alpha}}
\end{equation}
is proven in \cite[Theorem~2.4]{BC15}, with precise constant $c_{d,\alpha}$.

Given integers $L,N >1$, let $\Lambda = \Lambda_N = \Z^d/L^N\Z^d$ denote the
$d$-dimensional discrete torus of side length $L^N$.
The torus fractional Laplacian is defined by
\begin{equation}
\lbeq{fracLaptorus}
        (-\Delta_{\Lambda_N})^{\alpha/2}_{x,y}
        =
         \sum_{z \in \Zd} (-\Delta )^{\alpha/2}_{x,y+zL^N}
    \quad
    (x,y\in\Lambda_N).
\end{equation}
The sum on the right-hand side of \refeq{fracLaptorus} converges, by \refeq{fracLapdecay}.

\subsection{The \texorpdfstring{$|\varphi|^4$}{phi4} model}

We first define the model on the torus $\Lambda=\Lambda_N$, as usual for spin systems.
Let $d \ge 1$ and $\alpha \in (0,2)$.
Let $n \ge 1$.  The \emph{spin field} $\varphi$ is a function $\varphi : \Lambda \to \R^n$,
denoted $x \mapsto \varphi_x$,
which we may regard as an element $\varphi \in (\R^n)^\Lambda$.
The Euclidean norm of $v =(v^1,\ldots,v^n)\in \R^n$ is $|v| = [\sum_{i=1}^n (v^i)^2]^{1/2}$,
with inner product $v \cdot w = \sum_{i=1}^n v^i w^i$.
We extend the action of the fractional Laplacian to act on the spin field component-wise,
namely $((-\Delta_\Lambda)^{\alpha/2} \varphi)_x^i = \sum_{y\in \Lambda}
(-\Delta_\Lambda)^{\alpha/2}_{x,y} \varphi^i_y$.

Given $g>0$ and $\nu \in \R$, we define the interaction $V: (\R^n)^\Lambda \to \R$ by
\begin{equation} \label{e:VdefM}
    V(\varphi)
    =
    \sum_{x\in\Lambda}
    \big(\tfrac{1}{4} g |\varphi_x|^4 + \half \nu |\varphi_x|^2
    +
    \half
     \varphi_x \cdot ((-\Delta_\Lambda)^{\alpha/2} \varphi)_x \big)
    .
\end{equation}
The \emph{partition function} is defined by
\begin{equation}
\lbeq{pf}
        Z_{g,\nu,N} = \int_{(\R^n)^{\Lambda}}   e^{-V(\varphi)} d\varphi,
\end{equation}
where $d\varphi$ is the Lebesgue measure on
$(\R^n)^{\Lambda}$.
The expectation of a random variable $F:(\R^n)^{\Lambda} \to \R$ is
\begin{equation}
    \label{e:Pdef}
    \langle F \rangle_{g,\nu,N}
    = \frac{1}{Z_{g,\nu,N}} \int_{(\R^n)^{\Lambda}} F(\varphi) e^{-V(\varphi)} d\varphi.
\end{equation}

Given lattice points $\pp,\qq$,
we define the finite- and infinite-volume \emph{two-point function} by
\begin{align}
G_{\pp,\qq, N}(g,\nu; n) & =
 \pair{\varphi_\pp^1 \varphi_\qq^1}_{g,\nu, N}
 =
 \frac{1}{n} \pair{\varphi_\pp \cdot \varphi_\qq}_{g,\nu, N},
\\
\label{e:2pt-phi4}
G_{\pp,\qq}(g,\nu; n) & = \lim_{N \to \infty} G_{\pp,\qq, N}(g,\nu; n).
\end{align}
On the left-hand side of \refeq{2pt-phi4} we have $a,b\in\Zd$, and on the right-hand
side we identify these points with elements of $\Lambda_N$
for large $N$, by regarding the vertices of $\Lambda_N$ as a
cube in $\Z^d$ (without boundaries identified) approximately
centred at the origin.
The \emph{susceptibility} is defined by
\begin{equation}
    \label{e:susceptdef}
    \chi(g, \nu; n)
    = \lim_{N \to \infty} \sum_{b\in\Lambda_N}
    G_{a,b,N}(g,\nu;n)
\end{equation}
and can be used to identify the critical point of the model. By translation invariance, $\chi$ is independent of $a$.
Existence of the infinite volume limits in \refeq{2pt-phi4}--\refeq{susceptdef},
in our context, is discussed below.

\subsection{Weakly self-avoiding walk}

Let $d \ge 1$ and $\alpha \in (0,2)$.
Let $X$ denote the continuous-time Markov chain with state space $\Zd$
and infinitesimal generator $Q=-(-\Delta_{\Zd})^{\alpha/2}$.  Verification that
$Q$ has the attributes required of a generator is given in \cite[Lemma~\ref{alpha-lem:DJ}]{Slad17}.
Let $P$ be the probability measure associated with $X$, and $E$ the corresponding expectation;
a subscript $a$ specifies $X(0)=a$.
The transition probabilities are given by
\begin{equation}
        P_a(X(t)=b) = E_a(\1_{X(t)=b}) = (e^{tQ})_{a,b}.
\end{equation}

The \emph{local time} of $X$ at
$x$ up to time $T$ is the random variable $L_T^x = \int_0^T \1_{X(t)=x} \; dt$.
The \emph{self-intersection local time} up to time $T$ is the random variable
\begin{equation} \label{e:ITdef}
    I_T
    =
    \sum_{x\in\Z^d} \big(L_T^x\big)^2
    = \int_0^T \!\! \int_0^T \1_{X(t_1) = X(t_2)} \; dt_1 \, dt_2
    .
\end{equation}
Given $g>0$, $\nu \in \R$, and $\pp,\qq \in \Zd$,
the continuous-time weakly self-avoiding walk \emph{two-point function} is defined by
the integral
\begin{equation}
    \label{e:Gwsaw}
    G_{\pp,\qq}(g,\nu;0)
    =
    \int_0^\infty
    E_{\pp} \left(
        e^{-g I_T}
        \1_{X(T)=\qq} \right)
    e^{- \nu T}
    dT,
\end{equation}
and the \emph{susceptibility} is defined by
\begin{equation}
    \label{e:suscept-def2}
    \chi(g,\nu;0) =   \sum_{b\in \Z^d}  G_{a,b}(g,\nu;0)
    =
    \int_0^\infty E_{a}(e^{-gI_T}) e^{- \nu T} dT.
\end{equation}

The labels $0$ on the left-hand sides of \refeq{Gwsaw}--\refeq{suscept-def2} reflect
the fact that the weakly self-avoiding walk corresponds
to the formal $n=0$ case of the $n$-component $|\varphi|^4$ model.
As in earlier work on the 4-dimensional case, e.g., \cite{ST-phi4,Slad17}, we treat
both cases $n \ge 1$ (spins) and $n=0$ (self-avoiding walk) simultaneously
and rigorously, via
a supersymmetric spin representation for the weakly self-avoiding walk.

\subsection{Susceptibility and critical point}
\label{sec:suscept}

Let $d=1,2,3$; $n \ge 0$; $L$ be sufficiently large;
$\epsilon >0$ be sufficiently small; and $\alpha = \frac 12 (d+\epsilon)$.
Let $\tau^{(\alpha)}$ denote the diagonal element of the Green function, i.e.,
$\tau^{(\alpha)} = ((-\Delta)^{\alpha/2})^{-1}_{00}$.
One of the main results of \cite{Slad17} is that there exists $\gLfix \asymp \epsilon$
such that, for
$g \in [\frac{63}{64}\gLfix,\frac{65}{64}\gLfix]$,
there exist $\nu_c=\nu_c(g;n)= -(n+2) \tau^{(\alpha)} g(1+O(g))$ and $C>0$ such that
for $\nu=\nu_c+t$ with $t \downarrow 0$,
\begin{equation}
\lbeq{chigam}
        C^{-1} t^{-(1 +  \frac{n+2}{n+8} \frac{\epsilon}{\alpha} -C\epsilon^2)}
        \le
        \chi (g,\nu ;n)
        \le
        C t^{-(1 +  \frac{n+2}{n+8} \frac{\epsilon}{\alpha} +C\epsilon^2)}.
\end{equation}
This is a statement that there is a critical point at $\nu=\nu_c $,
and that the critical exponent $\gamma$ exists to order $\epsilon$, with
\begin{equation}\label{e:valgamma}
        \gamma = 1 +  \frac{n+2}{n+8} \frac{\epsilon}{\alpha} +O(\epsilon^2)
        \quad\quad
        (n \ge 0)
        .
\end{equation}
It is part of the statement that for $n \ge 1$
the susceptibility is given by the infinite-volume
limit \refeq{susceptdef}, under the above hypotheses.  The
critical exponent for the specific heat
is also computed to order $\epsilon$ in \cite{Slad17}, for $n \ge 1$.

\subsection{Main result}

Our main result is the following theorem, which shows that just
below the upper critical dimension, the exponent
for the critical two-point function ``sticks'' at its mean-field value
(see \refeq{resolventasy}), as predicted
by \cite{FMN72}.
The theorem applies
for all $n \ge 0$, including the case $n=0$ of the weakly
self-avoiding walk.  The critical value $\nu_c=\nu_c(g;n)$ is the one mentioned in
Section~\ref{sec:suscept}. As part of the proof of the theorem, it is shown that
for $n \ge 1$
the infinite-volume limit \refeq{2pt-phi4} exists for $\nu=\nu_c$.

\begin{theorem}
\label{thm:2ptfcn}
Let $d=1,2,3$; $n \ge 0$; $L$ be sufficiently large;
$\epsilon >0$ be sufficiently small; and $\alpha = \frac 12 (d+\epsilon)$.
For
$g \in [\frac{63}{64}\gLfix,\frac{65}{64}\gLfix]$ the critical two-point function
obeys, as $|a-b|\to\infty$,
\begin{equation}
\lbeq{2pt-mr}
        G_{\pp,\qq}(g, \nu_c; n) =
        (1+O(\epsilon))((-\Delta)^{\alpha/2})^{-1}_{a,b}
        \asymp  \frac{1}{|\pp-\qq|^{d-\alpha}}.
\end{equation}
\end{theorem}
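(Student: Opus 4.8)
The plan is to run the rigorous renormalisation group (RG) method, exactly as in the companion works \cite{BBS-saw4,ST-phi4,Slad17}, but now augmented by an external-field (observable) pair $\sigma_\pp,\sigma_\qq$ coupled to $\varphi_\pp^1,\varphi_\qq^1$, and to extract $G_{\pp,\qq}$ by differentiating the generating functional in $\sigma_\pp,\sigma_\qq$ at $\sigma=0$. First I would set up the finite-volume torus model on $\Lambda_N$ with the perturbed interaction $V + \sigma_\pp \varphi_\pp^1 + \sigma_\qq \varphi_\qq^1$ (and the supersymmetric analogue for $n=0$), and write the resulting expectation as a Gaussian integral with covariance $((-\Delta_{\Lambda_N})^{\alpha/2}+m^2)^{-1}$ for a suitable mass. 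The key input is the finite-range decomposition of this covariance into a sum $\sum_{j} C_j$ of positive-definite pieces with $C_j$ supported on distance $\le L^j$, together with the scaling estimates $C_j \asymp L^{-(j-1)(d-\alpha)}$ at scale $j$; these are supplied by the covariance analysis underlying \cite{Slad17} for the fractional Laplacian, replacing the $d-2$ decay of the nearest-neighbour case by $d-\alpha$.

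Next I would progressively integrate out $C_1,C_2,\ldots$, tracking the renormalised interaction as the usual bulk coordinate $(g_j,\nu_j,\ldots)$ together with the polymer remainder $K_j$, plus \emph{observable} coordinates carrying the dependence on $\sigma_\pp,\sigma_\qq$: a ``$\pp$-field'' part, a ``$\qq$-field'' part, and a product ``$\pp\qq$'' part whose constant component at the final scale yields the two-point function. The bulk flow is controlled by the same contractive fixed-point argument as in \cite{Slad17}, giving the critical value $\nu_c(g;n)$ and the existence of the infinite-volume limit. For the observable flow, the coupling constant at scale $j$ for the single-field terms obeys a near-linear recursion whose multiplicative factor is, to leading order, the covariance ratio $C_{j+1}/C_j \approx L^{-(d-\alpha)}$ times $(1+O(\epsilon))$ corrections from the Loc/point-to-point renormalisation; iterating over the $\sim \log_L|\pp-\qq|$ scales between $1$ and $j_{\pp\qq}:=\log_L|\pp-\qq|$ produces precisely the factor $|\pp-\qq|^{-(d-\alpha)}$, with the accumulated $(1+O(\epsilon))$ multiplicative errors bounded because $\sum_j \epsilon_j$ is summable (the bulk coupling stays $\asymp\epsilon$ but the relevant perturbations decay geometrically). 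Matching the constant with the exact asymptotics \refeq{resolventasy} of $((-\Delta)^{\alpha/2})^{-1}_{\pp,\qq}$, via the explicit perturbative (second-order) computation of the observable flow as in \cite{BBS-saw4}, identifies the prefactor as $(1+O(\epsilon))((-\Delta)^{\alpha/2})^{-1}_{\pp,\qq}$.

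The step I expect to be the main obstacle — and the one the abstract flags as the genuine novelty — is the treatment of the observables. In the $4$-dimensional nearest-neighbour work the two observable insertions are at fixed lattice points and one simply carries extra ``species'' in the polymer algebra; here, because $\alpha<2$, the covariance has long tails and the points $\pp,\qq$ interact at \emph{every} scale, not just above scale $j_{\pp\qq}$, so the naive bookkeeping of observable monomials does not close. The paper's resolution is a \emph{cluster expansion}: one expands the observable-dependent part of the integrand into connected clusters, shows the clusters containing $\pp$ (resp.\ $\qq$) have exponentially decaying weight in their diameter measured in scale-$j$ blocks, and thereby proves that the single-field observable coordinates satisfy good $T_\phi$-type norm bounds uniform in $N$ and in the separation $|\pp-\qq|$. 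Establishing the requisite exponential decay of the cluster weights — controlling the long-range tails of $C_j$ inside the expansion and combining them with the small-field/large-field split of $K_j$ — is where the real work lies; once this estimate is in hand, the crank of \cite{BS-rg-step} turns and the theorem follows by taking $N\to\infty$ and then $|\pp-\qq|\to\infty$.
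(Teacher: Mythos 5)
Your high-level framework is correct: observables $\sigma_a,\sigma_b$ coupled to $\varphi^1_a,\varphi^1_b$, the finite-range decomposition, progressive integration, bulk flow from \cite{Slad17}, and a final formula extracting $G_{a,b}$ from the $\sigma_a\sigma_b$ coefficient. But the proposal misidentifies both the source of the $|a-b|^{-(d-\alpha)}$ decay and the purpose of the cluster expansion, and in doing so would not close.

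First, the decay does not come from iterating a geometric contraction of the single-observable coupling $\lambda_j$. In fact $\lambda$ is \emph{marginal}: the paper shows $\lambda_{x,j} = 1 - \nu_j w_j^{(1)} + O(\gLfix^2)$, i.e.\ $\lambda_j = 1 + O(\epsilon)$ throughout the flow up to the coalescence scale, and the flow of $\lambda$ \emph{stops} at scale $j_{ab}$ (see \eqref{e:lambdapt2}). The power $|a-b|^{-(d-\alpha)}$ arises from the $q$-coordinate, $q_{x,N} \approx \lambda_{a,j_{ab}}\lambda_{b,j_{ab}}\, w_{N;a,b}$ (Proposition~\ref{prop:qN}), where the finite-range covariance sum $w_{N;a,b}\to((-\Delta)^{\alpha/2})^{-1}_{a,b}$ already carries the decay. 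So the proposed mechanism (``near-linear recursion with multiplicative factor $C_{j+1}/C_j\approx L^{-(d-\alpha)}$'') is not what happens.

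Second, your stated motivation for the cluster expansion is opposite to the actual situation. You claim that, because $\alpha<2$, the points $a,b$ interact at every scale. They do not: the decomposition $C=\sum_j C_j$ has the \emph{finite-range property} $C_{j;x,y}=0$ for $|x-y|\ge \tfrac12 L^j$ (see \eqref{e:frp}), so $C_{j;a,b}=0$ for $j\le j_{ab}$ and the two points are invisible to each other below the coalescence scale — just as in the nearest-neighbour $d=4$ case. The genuine obstacle is elsewhere: in the $d=4$ Gaussian-fixed-point analysis one iterates the $\lambda$-recursion with per-scale errors $O(g_j^2)$, which are summable because $g_j\to 0$; here the fixed point is non-Gaussian, $g_j\asymp \gLfix\asymp\epsilon$ is bounded below, and $\sum_j g_j^2$ diverges. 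That is why the recursive analysis of \cite{BBS-saw4,ST-phi4} cannot be carried over.

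The paper's workaround — the step you would need to supply — is a \emph{non-recursive} identity. By Gaussian integration by parts and translation invariance applied to $Z_j=\Ex_{w_j}\theta Z_0$ (Lemma~\ref{lem:ibp}), one gets $\bar D D_{\sigma_a} I_j(\Lambda) + \bar D D_{\sigma_a}\Lcal_j(\Lambda) = 1 + |\Lambda|^{-1} w_j^{(1)}[\bar D^2 I_j^\varnothing(\Lambda) + \bar D^2 \Lcal_j^\varnothing(\Lambda)]$ with $\Lcal_j=\log((I_j\circ K_j)(\Lambda)/I_j(\Lambda))$, and then (Lemma~\ref{lem:Ibds}) $\bar D D_{\sigma_a} I_j(\Lambda)=\lambda_{a,j}$ and $\bar D^2 I_j^\varnothing(\Lambda)=-\nu_j|\Lambda|$. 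This expresses $\lambda_{a,j}$ directly in terms of the known bulk flow, with a correction given by derivatives of $\Lcal_j$ at zero field. The cluster expansion enters precisely here, not to control tails of $C_j$: it constructs $\Lcal_j$ as a convergent Mayer series in the polymer activities $K_j(X)/I_j(X)$ and bounds $\bar D D_{\sigma_a}\Lcal_j$ and $\bar D^2 \Lcal_j^\varnothing$ \emph{uniformly in $|\Lambda|$} (Proposition~\ref{prop:Lcalbds}); the volume factor $|\Lambda|$ in $\bar D^2\Lcal_j^\varnothing$ cancels the $|\Lambda|^{-1}$ in the identity. Without this replacement for the recursive argument, the $O(g_j^2)$ errors do not sum and the control of $\lambda_j$ — and hence of $G_{a,b}$ — fails.
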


Note that Theorem~\ref{thm:2ptfcn} identifies
the constant in the decay of the interacting two-point function only up to
an error of order $\epsilon$.
However, the error is uniformly bounded in $a,b$,
so the power in the decay rate
takes its mean-field value, and this is true to \emph{all} orders in $\epsilon$.

\subsection{Strategy of proof}

The proof is based on a rigorous RG method developed
in a series of papers by Bauerschmidt, Brydges and Slade, where the focus is on the nearest-neighbour
models in dimension 4.  The method is adapted to the long-range setting in \cite{Slad17}.

Fix $g$ as in the statement of Theorem~\ref{thm:2ptfcn}.  In \cite{Slad17},
given small $m^2>0$, a critical value $\nu_0^c(m^2)$ is constructed, with the property
that the critical point $\nu_c$ is given by $\nu_c=\lim_{m^2 \downarrow 0}\nu_0^c(m^2)$.
Let
\begin{align}
U_0&=  \sum_{x\in\Lambda} \left( \tfrac{1}{4} g\vert\varphi_x\vert^4
+ \half \nu_0^c\vert\varphi_x\vert^2 \right)
- \sigma_a\varphi_a^1  -\sigma_b\varphi_b^1 .
\end{align}
For $x=a,b$, let $D_{\sigma_x} = \frac{\partial}{\partial \sigma_x}|_{\sigma_a=\sigma_b=0}$.
For $n\geq 1$, the two-point function obeys
\begin{align}
G_{a,b,N}(g,\nu_0^c(m^2)+m^2;n)
&= D_{\sigma_a}D_{\sigma_b} \log \Ex_C e^{-U_0},
\end{align}
where $\Ex_C $ denotes Gaussian expectation with covariance
$C = ((-\Delta_\Lambda)^{\frac\alpha2}+m^2)^{-1}$ ($m^2>0$ ensures existence of the inverse).
Thus the two-point function is interpreted as a perturbation
of a Gaussian expectation.
A similar representation is valid for the weakly self-avoiding walk, using a
Gaussian superexpectation.

Perturbation theory is performed inductively in a multi-scale fashion,
using a finite-range decomposition $C= C_1+\cdots+ C_N $, with $C_j $ of range $\sim L^j $.
This is implemented via the Gaussian convolution identity
$\Ex_C\theta = \Ex_{C_N}\theta\circ \cdots\circ \Ex_{C_1}\theta $,
where $\Ex_C\theta $ denotes Gaussian convolution.
At every step in the induction, we get a representation
\begin{align}
\Ex_{C_j}\theta\circ\cdots\circ \Ex_{C_1}\theta e^{-U_0}\approx e^{-U_j},
\label{eq:pertapprox}
\end{align}
where the polynomial
\begin{align}
U_j &=u_j\vert\Lambda\vert +\sum_{x\in\Lambda} \left( \tfrac {1}4 g_j\vert\varphi_x\vert^4
+ \tfrac{1}2\nu_j \vert\varphi_x\vert^2 \right)
- \lambda_{a,j}\sigma_a\varphi_a^1
-\lambda_{b,j}\sigma_b\varphi_b^1
-\frac12 (q_{a,j} + q_{b,j})\sigma_a\sigma_b
\end{align}
includes all Euclidean- and $O(n)$-invariant
monomials that are relevant and marginal according to the RG philosophy.
The error in this approximation is irrelevant in the RG sense and is
controlled uniformly in the volume by parametrising it as a polymer gas.
According to (\ref{eq:pertapprox}), after the final step of the induction has been performed,
we obtain
\begin{align}
G_{a,b,N}(g,\nu_0^c+m^2;n) &= D_{\sigma_a}D_{\sigma_b}\log \Ex_C e^{-U_0}
\approx
-D_{\sigma_a}D_{\sigma_b}U_N\vert_{\varphi=0}
= \frac12 (q_{a,N}+q_{b,N}).
\label{eq:approxG}
\end{align}

To control $q_{x,N}$ ($x=a,b $), we need to study the RG dynamical system
\begin{equation}
    (g_j,\nu_j,u_j,\lambda_{x,j},q_{x,j})\to (g_{j+1},\nu_{j+1},u_{j+1},\lambda_{x,j+1},q_{x,j+1}),
\end{equation}
and its non-perturbative corrections.
The initial condition is $(g_0,\nu_0,u_0,\lambda_{x,0},q_{x,0})=(g,\nu_0^c,0,1,1,0,0)$.
(In fact, the coupling constant $u_j$ does not play an important role for the two-point function.)
For $d=4$, the dynamical system has a Gaussian fixed point.
We use the
adaptation of the RG method, as developed in \cite{Slad17}, to the long-range setting
below the upper critical dimension, where the fixed point is instead non-Gaussian.
In \cite{Slad17} only the flow of  $g_j,\nu_j,u_j$ was studied and $\lambda_j,q_j$
did not appear, but the flow of $g_j,\nu_j,u_j$ remains identical when these
additional coupling constants do appear.
For the nearest-neighbour model on $\Z^4$,
the RG method was applied in \cite{BBS-saw4,ST-phi4} to prove $|\pp-\qq|^{-2}$
decay of the critical two-point function
for all $n \ge 0$.  We mainly follow the approach of \cite{BBS-saw4,ST-phi4}.
In particular, our treatment of the flow of $q_{x,j} $ remains the same and yields
\begin{equation}
\lbeq{qjapprox}
q_{x,j} \approx \lambda_{a,j_{ab}}\lambda_{b,j_{ab}}w_{j;a,b},
\end{equation}
where
$w_{j} = \sum_{k=1}^j C_k$, and where
$j_{ab} = \lfloor \log_L (2\vert a-b\vert) \rfloor$ is the \emph{coalescence scale}
defined to ensure that $C_{k;a,b} = 0$ when $k\leq j_{ab} $.
By definition of $j_{ab}$, the right-hand side of \refeq{qjapprox} is zero
for $j$ below the coalescence scale, and this remains true non-perturbatively
as well:  $q_{x,j}=0$ for scales $j \le j_{ab}$.

The flow of $\lambda_j$ was
analysed recursively for the Gaussian RG fixed point in \cite{BBS-saw4,ST-phi4},
but for the non-Gaussian fixed point in our current setting the
recursive analysis cannot be applied
due to the non-summability of remainder terms, and a different approach is needed.
Let $\bar D = \sum_{x\in \Lambda} \frac{\partial}{\partial \varphi_x^1}|_{\varphi=0}$
and
$\bar D^2 =
\sum_{x,y\in\Lambda} \frac{\partial^2}{\partial \varphi_x^1 \partial \varphi_y^1}|_{\varphi=0}$.
According to (\ref{eq:pertapprox}),
\begin{equation}
\lambda_{a,j}
=
e^{u_j\vert\Lambda\vert} \bar D D_{\sigma_a} e^{-U_j}
\approx
e^{u_j\vert\Lambda\vert}  \bar D D_{\sigma_a}  \Ex_{w_j}\theta e^{-U_0}.
\end{equation}
Let $w_j^{(1)} = \sum_{x\in\Lambda} w_{j;a,x} $, which is independent of $a$.
Using Gaussian integration by parts and translation invariance, we show in \refeq{ddzj} that
\begin{align}
\lbeq{ibp-strategy}
\bar D D_{\sigma_a}  \Ex_{w_j}\theta e^{-U_0}
&=
\Ex_{w_j} e^{-U_0|_{\sigma=0}} + \frac1{\vert\Lambda\vert}w_j^{(1)}
\bar D^2\Ex_{w_j}\theta e^{-U_0|_{\sigma=0}} .
\end{align}
By using (\ref{eq:pertapprox}) to evaluate the two terms in the above right-hand side
approximately, we thus obtain
\begin{align}
    \lambda_{a,j} \approx 1 +  w_j^{(1)}\nu_j.
\label{eq:approxlambda}
\end{align}
This relates $\lambda_j $ to the \emph{bulk} coupling constants $g_j,\nu_j $
whose flow is known from  \cite{Slad17}.
In particular, it is shown in \cite{Slad17} that $w_j^{(1)}\nu_j = O(\epsilon) $.
All of the above is carried out uniformly in $m^2$, which permits the limit
$m^2 \downarrow 0$ to be taken after the infinite-volume limit.
Since $\lim_{m^2 \downarrow 0}\lim_{N \to \infty}
w_{N:a,b} = ((-\Delta_{\Zd})^{\alpha/2})^{-1}_{a,b}
\asymp |a-b|^{-(d-\alpha)}$, all this,
together with the rigorous versions of \refeq{qjapprox} and (\ref{eq:approxG}),
implies our main result \refeq{2pt-mr}. The non-perturbative corrections to (\ref{eq:approxlambda}) due
to the irrelevant error coordinate are controlled using a cluster expansion.
This is the main
innovation in the proof of Theorem~\ref{thm:2ptfcn}.

The remainder of the paper is organised as follows.  In Section~\ref{sec:setup}, we
provide some background and definitions needed for the RG method.  In Section~\ref{sec:RGmap},
we formulate the RG map and state the main theorem which provides estimates on the RG map;
this is an adaptation of the main result of \cite{BS-rg-step} as applied to the long-range
model in \cite{Slad17}.
The main difference, compared to \cite{Slad17}, is the inclusion of observables in the RG map.
The flow of the observable coupling constant $\lambda_j$ is analysed in Section~\ref{sec:lambda}.
The flow of the observable coupling constant $q_j$ is then analysed in Section~\ref{sec:q},
where the proof of Theorem~\ref{thm:2ptfcn} is completed.

\section{Set-up for RG method}
\label{sec:setup}

In this section, we summarise
some notation and background for the RG method,
needed for the proof of Theorem~\ref{thm:2ptfcn}.
Additional details can be found in \cite{Slad17}.

\subsection{Formula for two-point function}

We begin with a formula for the two-point function
that serves as our starting point.

\subsubsection{The case \texorpdfstring{$n \ge 1$}{n > 0}}

For $n \ge 1$, we define
\begin{equation}
\lbeq{taudef}
        \tau_x = \half |\varphi_x|^2  \quad\quad (n\ge 1).
\end{equation}
Given $g>0$, $\nu \in \R$, $m^2>0$, we set
\begin{equation} \label{e:g0gnu0nu}
    g_0 = g , \quad \nu_0 =  \nu-m^2.
\end{equation}
Given $a,b \in \Lambda$, we introduce
\emph{observable fields} $\sigma_a,\sigma_b \in \R$,
and define $V_0$ and $Z_0$ by
\begin{equation}
\lbeq{V0Z0}
        V_{0}(\varphi_x) =  g_0\tau_x^2+  \nu_0\tau_x
        - \sigma_a  \varphi_a^1\1_{x=a}  - \sigma_b  \varphi_b^1\1_{x=b},
         \quad
    Z_0(\varphi) = e^{-V_0(\Lambda)}
    ,
\end{equation}
with $V_0(\Lambda) = \sum_{x\in\Lambda}V_0(\varphi_x)$.

Given a $\Lambda\times\Lambda$ covariance matrix $w$,
let $\Ex_w$ denote the Gaussian expectation with covariance $w$.
Let $C=((-\Delta_{\Lambda_N})^{\alpha/2} +m^2)^{-1}$.
By shifting part of the $|\varphi|^2$ term into the covariance,
the expectation \refeq{Pdef} can be rewritten as
\begin{equation}
\lbeq{ExF}
\pair{F}_{g,\nu,N}
    =
\frac{\Ex_C  F e^{-V^\varnothing_0(\Lambda)}}{\Ex_C e^{-V^\varnothing_0(\Lambda)}},
\end{equation}
where $V^\varnothing_0(\Lambda)$ denotes the evaluation of $V_0(\Lambda)$
at $\sigma_a = \sigma_b = 0$.
When $F$ is a monomial, it is standard to write this ratio of expectations
as a derivative of a logarithmic generating function.
Let $D_{\sigma_a}$ denote the operator
$\frac{\partial}{\partial \sigma_a}|_{\sigma_a = \sigma_b = 0}$, and similarly
for higher derivatives.
Then the two-point function is given, for $n \ge 1$, by
\begin{equation}
\label{e:corrdiff}
        G_{a,b,N}(g,\nu;n) =
        \langle \varphi_a^1 \varphi_b^1 \rangle_{g,\nu,N}
        =
        D_{\sigma_a\sigma_b}^2 \log \Ex_C e^{- V_{0} (\Lambda)}.
\end{equation}

\subsubsection{The case \texorpdfstring{$n =0$}{n = 0}}

For $n= 0$, as in several previous papers (e.g., \cite{BBS-saw4-log,BBS-saw4,ST-phi4})
we formulate the weakly self-avoiding walk
model as the infinite-volume limit of a \emph{supersymmetric}
version of the $|\varphi|^4$ model.  The supersymmetric model involves a complex boson field
$(\phi_x,\phib_x)_{x \in \Lambda}$ and
a fermion field given by the 1-forms $\psi_x = \frac{1}{\sqrt{2\pi i}}d\phi_x$,
$\psib_x = \frac{1}{\sqrt{2\pi i}}d\phib_x$.
For $n=0$, in place of \refeq{taudef}, we set
\begin{equation}
        \tau_x =  \phi_x\phib_x + \psi_x \wedge \psib_x \quad \quad (n=0),
\end{equation}
and we replace $\varphi_a^1,\varphi_b^1$ in \refeq{V0Z0} by $\phib_a,\phi_b$.

For $n=0$, a formula closely related to \refeq{corrdiff} is given, e.g., in
\cite[\eqref{phi4-e:DaDbPNn0}]{ST-phi4}, with
$\Ex_C$ in \refeq{corrdiff} replaced by the Gaussian superexpectation.
As in \cite{ST-phi4}, our formalism applies
to the supersymmetric model with only notational changes,
with $n$ interpreted as $n=0$ in formulas such as \refeq{valgamma}, and with the Gaussian expectation replaced by a
superexpectation.
For notational simplicity, we concentrate throughout the paper on the case $n \ge 1$.

\subsection{Progressive integration}
\label{sec:prog}

In our version of the RG method, the expectation $\Ex_C e^{-V_0(\Lambda)}$ of \refeq{corrdiff}
is evaluated in a
multi-scale fashion, via a finite-range decomposition of the covariance $C$.
We use the same finite-range decomposition $C=C_1+C_2+\cdots +C_{N-1}+C_{N,N}$
of the covariance $C=((-\Delta_{\Lambda_N})^{\alpha/2} +m^2)^{-1}$ that is described
and analysed in \cite[Section~\ref{alpha-sec:frd}]{Slad17}.
A closely related decomposition was first introduced in \cite{Mitt16} and subsequently
developed in \cite{Mitt17}.
The covariances $C_j$ are translation invariant, and have the \emph{finite-range} property
\begin{equation}
    \label{e:frp}
    C_{j;x,y} = 0 \quad \text{if \; $|x-y| \geq \frac{1}{2} L^j$}
    .
\end{equation}
Thus, we may regard $C_j$ either as a covariance on $\Zd$ or on $\Lambda_N$,
as long as $N>j$. Viewing the $C_j$ as covariances on $\Zd$, we also
have a decomposition of the infinite-volume covariance given by
$((-\Delta)^{\alpha/2} + m^2)^{-1} = \sum_{j=1}^\infty C_j$.
We leave implicit the dependence of the covariance $C_j$
on $m^2$.  According to \cite[\eqref{alpha-e:scaling-estimate}]{Slad17},
for $m^2$ bounded,
the covariances $C_j$ satisfy the estimates
\begin{equation}
\label{e:scaling-estimate}
|C_{j;x,y}|
    \le
c L^{-(d - \alpha) (j - 1)} (1 + m^4 L^{2\alpha (j - 1)})^{-1}.
\end{equation}

For $n \ge 1$, and for an integrable $F : (\R^n)^\Lambda \to \R$,
we define the convolution $\Ex_C\theta F$  by
\begin{equation}
\label{e:thetadef}
    (\Ex_C\theta F)(\varphi) = \Ex_C F(\varphi+\zeta),
\end{equation}
where the expectation $\Ex_C$ on the right-hand side acts on $\zeta$ and leaves $\varphi$
fixed. A similar construction is used for $n=0$ (see, e.g., \cite[Section~\ref{log-sec:ga}]{BBS-saw4-log}).
By \cite[Proposition~\ref{norm-prop:conv}]{BS-rg-norm}, the Gaussian convolution
can be evaluated as
\begin{equation}
    \label{e:progressive}
    \Ex_{C}\theta F
    =
    \big( \Ex_{C_N}\theta \circ \Ex_{C_{N-1}}\theta \circ \cdots
    \circ \Ex_{C_{1}}\theta\big) F,
\end{equation}
with an abuse of notation where $C_N$ means $C_{N,N}$.
To compute the expectation $\Ex_C e^{-V_0(\Lambda)}$ in \eqref{e:corrdiff}, we use
\eqref{e:progressive}
to evaluate $\Ex_C \theta e^{-V_0(\Lambda)}$ progressively, as follows.
We write $\Ex_{j} = \Ex_{C_j}$ and let
\begin{equation}
\label{e:Z0def}
    Z_{j+1} = \Ex_{j+1}\theta Z_j \quad\quad
    (0 \le j<N),
\end{equation}
with $Z_0 = e^{-V_0(\Lambda)}$ as in \eqref{e:V0Z0}.
By \eqref{e:progressive}, $\Ex_CZ_0$ is obtained by
setting $\varphi =0$
in
\begin{equation}
\label{e:ZN}
        Z_N = \Ex_C \theta Z_0.
\end{equation}
This leads us to study the recursion $Z_j \mapsto Z_{j+1}$.

\subsection{Function space}

The observable fields $\sigma_a,\sigma_b$ are needed only for the purpose of evaluating
the second derivative in \refeq{corrdiff}.  Therefore, dependence on the observable
fields which is higher order than quadratic plays no role.  We make use of this
by defining the function space $\Ncal$ as explained below.  We also define the $T_\varphi$
seminorm on $\Ncal$.  These definitions are as in, e.g., \cite{BS-rg-loc,ST-phi4}.
We focus on the case $n\geq 1 $; the modifications needed for $n=0$ are as in, e.g., \cite{BS-rg-IE}.

\subsubsection{The space \texorpdfstring{$\Ncal$}{N}}
\label{sec:Ncal}

Given $p_\Ncal > 0$, let
\begin{equation}
\Ncal^\varnothing = C^{p_\Ncal}((\R^n)^\Lambda).
\end{equation}
As in \cite[Section~\ref{alpha-sec:tfnorm}]{Slad17}, we fix any $p_\Ncal \ge 10$.
For $n = 0$, $\Ncal^\varnothing$ is instead a space of even differential forms
with $p_\Ncal$-times differentiable coefficients.

In order to treat functions of the observable fields $\sigma_a,\sigma_b$,
we define an extension $\Ncal$
of $\Ncal^\varnothing$ exactly as in
\cite[Section~\ref{phi4-sec:phi4observables_representation}]{ST-phi4}.
Namely, let $\Ncal'$ be the space of real-valued functions
of $\varphi,\sigma_a,\sigma_b$
which are $C^{p_\Ncal}$ in $\varphi$ and $C^\infty$ in $\sigma_a,\sigma_b$.
An ideal $\Ical$ in $\Ncal'$ is formed by those
elements of $\Ncal'$ whose formal power series expansion in the observable fields
to order $1,\sigma_a, \sigma_b,\sigma_a\sigma_b$ is equal to zero.
We define $\Ncal$ as the quotient algebra $\Ncal =\Ncal'/\Ical$.
Then $\Ncal$ has a direct sum decomposition
\begin{equation}
\label{e:Ncaldecomp4}
\Ncal = \Ncal'/\Ical = \Ncal^\varnothing \oplus \Ncal^a \oplus \Ncal^b \oplus \Ncal^{ab},
\end{equation}
where elements of $\Ncal^a, \Ncal^b , \Ncal^{ab}$
are given by elements of $\Ncal^\varnothing$ multiplied
by $\sigma_a$, by $\sigma_b$, and by $\sigma_a\sigma_b$ respectively.
Thus, elements of $\Ncal$ can be
identified with polynomials over $\Ncal^\varnothing$ in the observable
fields with terms only of
order $1,\sigma_a, \sigma_b,\sigma_a\sigma_b$,
i.e.\ $F \in \Ncal$ can be written as
\begin{equation}
\label{e:Fdecomp}
    F = F_\varnothing + \sigma_a F_a + \sigma_b F_b +\sigma_a \sigma_b F_{ab}
\end{equation}
with $F_\Ncallab \in \Ncal^\varnothing$ for each $\Ncallab \in \{\varnothing, a,b,ab\}$.
There are natural projections $\pi_\Ncallab : \Ncal \to \Ncal^\Ncallab$ defined for such $F$
by $\pi_\varnothing F = F_\varnothing$,
$\pi_a F = \sigma_a F_a$, $\pi_b F = \sigma_b F_b$,
and $\pi_{ab} F = \sigma_a\sigma_b F_{ab}$.  We set $\pi_*=1-\pi_\varnothing$.
The expectation $\Ex_C$ acts term-by-term on $F \in \Ncal$,
namely $\Ex_C F =
\Ex_C F_\varnothing + \sigma_a \Ex_C F_a + \sigma_b \Ex_C F_b
+\sigma_a \sigma_b \Ex_C F_{ab}$ for $F$ as in \eqref{e:Fdecomp}.

\subsubsection{Seminorms}
\label{sec:Tphi}

A family of seminorms is used to control the size of elements of $\Ncal$.
Let $\Lambda^*$ denote the set of sequences of any finite length (including length $0$),
composed of elements of $\Lambda \times \{ 1, \ldots, n \}$.
Let $\varphi\in(\R^n)^\Lambda$ be a field, and let $F \in \Ncal^\varnothing$. Given
$\vec x = ((x_1, i_1), \ldots, (x_p, i_p)) \in \Lambda^*$, we write $|\vec x| = p$ and let
\begin{equation}
F_{\vec x}(\varphi)
    =
\ddp{^p F(\varphi)}{\varphi^{i_1}_{x_1} \ldots \partial\varphi^{i_p}_{x_p}}.
\end{equation}
A \emph{test function} $g$ is a mapping $g:\Lambda^* \to \R$, written
$\vec x \mapsto g_{\vec x}$.
We define the \emph{$\varphi$-pairing} of $F$ with a test function $g$ by
\begin{equation}
\label{e:pairing}
\langle F, g \rangle_\varphi
    =
\sum_{|\vec x| \le p_\Ncal}
\frac{1}{|\vec x|!} F_{\vec x}(\varphi) g_{\vec x}.
\end{equation}
Given a parameter $\h_j > 0$, a scale-dependent norm $\|g\|_\Phi = \|g\|_{\Phi_j(\h_j)}$
is defined on test functions in \cite[\eqref{alpha-e:Phinorm}]{Slad17}. The $\Phi=\Phi_j(\h_j)$
norm controls the size of a
test function and its discrete gradients up to order $p_\Phi=4$, but its precise
definition is immaterial for the present discussion. With $B_\Phi(1)$
the
unit ball in $\Phi$, we define the $T_\varphi = T_{\varphi,j}(\h_j)$
seminorm on $\Ncal^\varnothing$ by
\begin{equation}
\|F\|_{T_\varphi}
    =
\sup_{g\in  B_\Phi(1)}
|\langle F, g\rangle_\varphi|.
\end{equation}
Given an additional parameter $\h_\sigma = \h_{\sigma,j}$, we extend this definition to all of
$\Ncal$ exactly as in \cite{BS-rg-loc}, i.e., the seminorm
of $F$ of the form \eqref{e:Fdecomp} is defined to be
\begin{equation}
\lbeq{Tphiobs}
\|F\|_{T_\varphi}
        =
\|F_\varnothing\|_{T_\varphi}
        +
\h_\sigma (\|F_a\|_{T_\varphi} + \|F_b\|_{T_\varphi})
        +
\h_\sigma^2 \|F_{ab}\|_{T_\varphi}.
\end{equation}

\subsection{Blocks, polymers and scales}\label{sec:blockspolymers}

\subsubsection{Blocks and polymers}

The finite-range covariance decomposition is well-suited to a block decomposition
of the torus $\Lambda_N$ of period $L^N$ into disjoint blocks of side $L^j$, for
scales $0 \le j \le N$. This decomposition is an important ingredient in our choice of the coordinates in which we represent the RG map.
We now describe it in detail, along with a number of useful related definitions,
as in \cite{BS-rg-step}.

We partition the torus $\Lambda_N$, which has period $L^N$, into disjoint \emph{$j$-blocks} of side $L^j$
($j\le N$). Each $j$-block is a translate of the block $\{ x \in \Lambda : 0 \le x_i < L^j, i = 1, \ldots, d\}$.
We denote the collection of $j$-blocks by $\Bcal_j$.

A \emph{$j$-polymer} is any (possibly empty) union of $j$-blocks, and
$\Pcal_j$ denotes the set of $j$-polymers.
Given $X \in \Pcal_j$, we denote by $\Bcal_j(X)$ the set
of $j$-blocks in $X$,
and denote by $\Pcal_j(X)$  the set
of $j$-polymers in $X$.
A nonempty polymer $X$ is \emph{connected} if for any $x, x' \in X$, there is
a sequence $x = x_0, \ldots, x_n = x' \in X$ with $\|x_{i+1} - x_i\|_\infty = 1$
for $i = 0, \ldots, n - 1$. Let $\Ccal_j$ denote the set of connected $j$-polymers
and, for any $X\in\Pcal_j$, let ${\rm Comp}_j(X) \subset \Ccal_j(X)$ be the set of
connected components of $X$.
The empty set $\varnothing$ is not in $\Ccal_j$.

We say
that two polymers $X,Y$ \emph{do not touch} if $\min\{\|x-y\|_\infty :
x \in X, y \in Y\} >1$.
We call a connected polymer $X\in\Ccal_j$ a \emph{small set} if it consists of at most
$2^d$ $j$-blocks, and write $\Scal_j$ for the collection of small sets in $\Ccal_j$.
The \emph{small-set neighbourhood} of a polymer $X\in \Pcal_j$
is $X^\Box = \cup_{Y \in \Scal_j(X): X \cap Y \neq \varnothing}Y$.

For $F_1, F_2 : \Pcal_j \to \Ncal$, we define the scale-$j$ \emph{circle product}
$F_1 \circ F_2 : \Pcal_j \to \Ncal$ by
\begin{equation}
(F_1 \circ F_2)(Y) = \sum_{X\in\Pcal_j(Y)} F_1(Y\setminus X) F_2(X)
        \quad\quad
        (Y \in \Pcal_j).
\end{equation}
We only consider maps $F:\Pcal_j \to \Ncal$ with the property that $F(\varnothing)=1$.
The identity element for the circle product is the map
$\1_\varnothing: \Pcal_j \to \Ncal$ defined by
$\1_\varnothing(\varnothing) =1$,
$\1_\varnothing(X) =0$ if $X \ne \varnothing$.

\subsubsection{Mass and coalescence scales}

Two scales play an important role for the nature of
the RG recursion \refeq{Z0def}. We define the \emph{mass scale} $j_m$ by
\begin{equation}
\lbeq{jmdef}
        j_m = \lceil f_m \rceil, \quad\quad
        f_m = 1+ \frac{1}{\alpha} \log_L m^{-2}.
\end{equation}
By definition,
$j_m$ is the smallest scale for which $m^2L^{\alpha (j_m-1)} \ge 1$.
The mass scale is the scale beyond which the mass $m^2$ plays a significant
helpful role in the decay of the covariance $C_j$.  Indeed, by
\eqref{e:scaling-estimate} and the elementary inequality
(with notation $x_+=\max\{x,0\}$)
\begin{equation}
(1 + m^4 L^{2\alpha (j - 1)})^{-1}
    \le
L^{-2\alpha (j - j_m)_+},
\end{equation}
we have
\begin{equation}
\label{e:scaling-estimate-jm}
|C_{j;x,y}|
    \le
c L^{-(d - \alpha)(j - 1) - 2\alpha (j - j_m)_+}.
\end{equation}

We also define the \emph{coalescence scale} $j_{ab}$ by
\begin{equation}
     \label{e:jabdef}
        j_{\pp \qq}
        =
        \big\lfloor
     \log_{L} (2 |\pp - \qq|)
     \big\rfloor
     .
\end{equation}
By definition, $j_{ab}$ is the unique integer such that
\begin{equation}
\label{e:jabbds}
        \tfrac 12 L^{j_{ab}} \le |a-b| < \tfrac 12 L^{j_{ab}+1}.
\end{equation}
By \eqref{e:frp}, $C_{j;a,b}= 0$ for all $j\le j_{ab}$, and hence
\begin{equation}
    ((-\Delta)^{\alpha/2} +m^2)^{-1}_{a,b}
    = \sum_{j=1}^\infty C_{j;a,b}
    = \sum_{j=j_{ab}+1}^\infty C_{j;a,b}.
\end{equation}

Ultimately, we take the limit $m^2 \downarrow 0$ before considering large $|a-b|$,
so we can and do assume that $j_m > j_{ab}$.

\subsection{Localisation operator \texorpdfstring{$\Loc$}{Loc}}

We use the operator $\Loc$ defined and analysed in
\cite{BS-rg-loc}, to extract a local polynomial
from an element $F\in\Ncal$.  For appropriate $X \subset \Lambda$,
the local polynomial $\Loc_X F$
extracts the
parts of $F$ that are relevant and marginal in the RG sense.

\subsubsection{Local polynomials}

The range of the operator $\Loc_X$ is a certain vector space
$\Ucal(X)$ of local polynomials in the field.
We now define this vector space, taking into account that
the elements $F \in \Ncal$ to which $\Loc_X$ will be applied obey Euclidean covariance
and $O(n)$ invariance on $\Ncal^\varnothing$.

Given \emph{bulk coupling constants} $g,\nu, u \in \R$; $a,b \in \Lambda$;
\emph{observable fields} $\sigma_a,\sigma_b \in \R$;
and the \emph{observable coupling constants}
$\lambda_a,\lambda_b,q_a,q_b \in \R$; let
\begin{align}
\label{e:Unulldef}
        U^\varnothing(\varphi_x) &= g\tau_x^2 + \nu\tau_x +  u
        ,
        \\
\label{e:Udef}
        U(\varphi_x) &= U^\varnothing(\varphi_x)
        - \sigma_a \lambda_{a}\varphi_a^1\1_{x=a}  - \sigma_b \lambda_{b}\varphi_b^1\1_{x=b}
        - \tfrac12 ( q_a\1_{x=a} + q_b\1_{x=b}) \sigma_a\sigma_b
        .
\end{align}
The symbol $\varnothing$ denotes the bulk.
(For $n=0$, we can take $u=0$ in $U^\varnothing$
due to supersymmetry; see \cite{BBS-rg-pt}.)
For $U$ as in \refeq{Udef} and $X \subset \Lambda$, we write
\begin{equation}
    \label{e:UX}
    U(X) = \sum_{x \in X}U(\varphi_x)
    .
\end{equation}
Let $\Ucal(X)$ denote the space of polynomials of the form \eqref{e:UX}.
Let $\Vcal(X) \subset \Ucal(X)$ be the subspace
for which $u = q_a = q_b = 0$.
Note that $V_0$ of \refeq{V0Z0} obeys $V_0(X) \in \Vcal(X)$ with
$\lambda_{a}=\lambda_{b}=1$.

\subsubsection{Definition of Loc}

To define $\Loc$, we must first define a set of polynomial test functions, as in
\cite[Section~\ref{loc-sec:oploc}]{BS-rg-loc}.
Let $p>0$, let $\multia = (\multia_1, \ldots, \multia_p)$ with each $\multia_r \in \N_0^d$,
and let $k=(k_1,\ldots,k_p)$ with each $k_r \in \{1,\ldots, n\}$.
Let $\Lambda' \subset \Lambda$ be a coordinate patch as defined in
\cite[Section~\ref{loc-sec:oploc}]{BS-rg-loc} (e.g., $\Lambda'$ can
be any small set as defined in Section~\ref{sec:blockspolymers}).
Recall the set $\Lambda^*$ of sequences, defined in Section~\ref{sec:Tphi}.
We define a test function $q^{\multia,k}$, supported on sequences
$\vec x = ((x_1, i_1), \ldots, (x_p, i_p)) \in\Lambda^*$ with each
$x_r \in \Lambda'$, by
\begin{equation}
\label{e:poly-testfcn}
    q^{\multia,k}_{\vec x}
    =
    \prod_{r=1}^p \delta_{i_r,k_r} x_r^{\multia_r}
    =
    \prod_{r=1}^p \delta_{i_r,k_r} \prod_{l=1}^d x_{r,l}^{\multia_{r,l}}
    .
\end{equation}
We include the case $p=0$ by interpreting \eqref{e:poly-testfcn} as the constant number $1$
in this case.
The role of the coordinate patch, which cannot ``wrap around'' the torus, is to permit
polynomial test functions such as \eqref{e:poly-testfcn} to be well-defined.
We define the \emph{field dimension}
$[\varphi] =\frac{d-\alpha}{2}$.
The \emph{dimension} of the test function $q^{\multia,k}$ is defined to equal
$p[\varphi] + \sum_{r=1}^p \sum_{l=1}^d \multia_{r,l}$.
Given $d_+\ge 0$, we let $\Pi = \Pi^{d_+}[\Lambda'] \subset \Phi$ denote the span of all test
functions $q^{\multia,k}$ of dimension at most $d_+$.

Let $\Ncal^\varnothing(\Lambda')$ denote the space of functionals of the field that
only depend on field values at points in $\Lambda'$.
By \cite[Proposition~\ref{loc-prop:LTsymexists}]{BS-rg-loc}, for $X \subset \Lambda'$,
there is a unique operator $\loc_X : \Ncal^\varnothing(\Lambda') \to \Vcal(X)$
(independent of the choice of $\Lambda'$) such that
\begin{equation}
\langle F, g \rangle_0
    =
\langle \loc_X F, g \rangle_0
\text{ for all } g\in\Pi,
\end{equation}
with the pairing given by \refeq{pairing} with $\varphi=0$.
To extend $\loc$ from $\Ncal^\varnothing$ to $\Ncal$, suppose we are given $d_+^\Ncallab$ for
$\Ncallab = \varnothing, a, b, ab$. We let $\LTbar^\Ncallab_X$ denote the operator
$\LTbar_X$ with $d_+ = d_+^\Ncallab$ and,
for $F$ as in \eqref{e:Fdecomp}, define
\begin{equation}
\Loc_Y F
    =
\LTbar^\varnothing_X F_\varnothing
    +
\sigma_a \LTbar^a_X F_a
    +
\sigma_b \LTbar^b_X F_b
    +
\sigma_a\sigma_b \LTbar^{ab}_X F_{ab}.
\end{equation}
It remains to specify the $d_+^\Ncallab$.

In \cite{Slad17}, for scales below the mass scale
the range of the restriction of $\Loc$ to $\Ncal^\varnothing$ is
specified as the span of $\{1,\tau,\tau^2\}$.  This corresponds to the choice
$d_+^\varnothing = d$ (using symmetry considerations to disregard non-symmetric monomials).
Above the mass scale, a different range for $\Loc$ is used, namely $d_+^\varnothing=d-\alpha$,
due to the enhanced decay of the covariance decomposition
(see \cite[Section~\ref{alpha-sec:Loc}]{Slad17}).

As in \cite[Section~\ref{pt-sec:loc-specs}]{BBS-rg-pt}, we set
$d_+^a=d_+^b= [\varphi]$ when $\Loc$ acts at
scales strictly less than $j_{ab}$,
and set $d_+^a=d_+^b=0$ for larger scales. We always take $d_+^{ab}=0$.

The following elementary lemma will be useful.
Let $\1^{1}$ denote the constant test function supported on sequences
of length $1$ and defined by $\1^1_{(x,i)} =\delta_{i,1}$.
Likewise, let $\1^{2}$ denote the constant test function supported on
sequences of length $2$ and defined by
$\1^2_{((x_1,i_1),(x_2,i_2))}=\delta_{i_1,1}\delta_{i_2,1}$.
Note that $\1^1, \1^2$ are each of the form \eqref{e:poly-testfcn},
with respective dimensions $[\varphi]$ and $2[\varphi]$.

\begin{lemma}
\label{lem:Loc}
Given $x\in\Lambda$ and a coordinate patch $\Lambda' \ni x$, suppose that
$F \in \Ncal^\varnothing(\Lambda')$.
For $j<N$ and $m=1,2$,
\begin{align}
\pair{(1-\LT_x)F,\1^{m}}_0 = 0.
\end{align}
Moreover, if $j < j_{ab}$, then for $\Ncallab = a, b$,
\begin{equation}
\pair{(1-\LTbar^\Ncallab_x)F,\1^{1}}_0 = 0.
\end{equation}
\end{lemma}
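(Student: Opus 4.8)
The plan is to reduce both assertions to the defining property of the localisation operator recorded just above the lemma: given a coordinate patch $\Lambda'\ni x$ and the parameter $d_+$ in use, the operator $\loc_x$ (i.e.\ $\LTbar_x$ with that $d_+$) satisfies $\pair{F,g}_0=\pair{\loc_x F,g}_0$ for every $F\in\Ncal^\varnothing(\Lambda')$ and every polynomial test function $g\in\Pi^{d_+}[\Lambda']$. Equivalently $\pair{(1-\loc_x)F,g}_0=0$ for all such $g$, so the entire content of the lemma is the verification that the particular test functions $\1^1$ and $\1^2$ lie in the appropriate spaces $\Pi^{d_+}[\Lambda']$. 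Since $\Pi^{d_+}[\Lambda']$ is by definition spanned by the test functions $q^{\multia,k}$ of dimension at most $d_+$, and since $\1^1,\1^2$ are themselves of this form with dimensions $[\varphi]$ and $2[\varphi]$ respectively, membership holds exactly when $[\varphi]\le d_+$, respectively $2[\varphi]\le d_+$.

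For the first claim, $\LT_x$ acting on $F\in\Ncal^\varnothing(\Lambda')$ is $\loc_x$ with $d_+=d_+^\varnothing$, and $d_+^\varnothing$ equals $d$ below the mass scale and $d-\alpha$ above it. Using $[\varphi]=\tfrac{d-\alpha}{2}$ together with $0<\alpha<d$ (which holds in our setting since $\alpha=\tfrac{1}{2}(d+\epsilon)<d$ for $\epsilon$ small), one has $[\varphi]\le 2[\varphi]=d-\alpha\le d$, so $d_+^\varnothing\ge 2[\varphi]\ge[\varphi]$ in both cases. Hence $\1^1,\1^2\in\Pi^{d_+^\varnothing}[\Lambda']$ at every scale $j<N$ at which $\loc$ is defined, and the defining property gives $\pair{(1-\LT_x)F,\1^m}_0=0$ for $m=1,2$.

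For the second claim, when $\Loc$ acts at a scale $j<j_{ab}$ the observable specification sets $d_+^\Ncallab=[\varphi]$ for $\Ncallab=a,b$, so $\LTbar^\Ncallab_x$ is $\loc_x$ with $d_+=[\varphi]$. Since $\1^1$ has dimension exactly $[\varphi]$, it lies in $\Pi^{[\varphi]}[\Lambda']$, and the same defining property yields $\pair{(1-\LTbar^\Ncallab_x)F,\1^1}_0=0$.

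I do not expect a genuine obstacle here: the lemma only records that $\1^1$ and $\1^2$ have low enough dimension to be reproduced by every version of $\loc$ used in the paper. The one point needing (minimal) care is tracking which value of $d_+$ is attached to which component $\Ncallab$ at which scales, together with the elementary inequality $[\varphi]\le d-\alpha$ needed to cover the above-mass-scale choice $d_+^\varnothing=d-\alpha$.
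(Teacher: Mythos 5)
Your proposal is correct and follows essentially the same route as the paper: reduce to the defining property of $\loc_x$ and verify that $\1^1,\1^2$ lie in the relevant polynomial test-function spaces by comparing their dimensions $m[\varphi]$ ($m=1,2$) with $d_+^\varnothing$ (respectively $d_+^a=d_+^b=[\varphi]$ for $j<j_{ab}$), using $2[\varphi]=d-\alpha\le d_+^\varnothing$. The paper states this more tersely but the content is identical.
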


\begin{proof}
The first statement is an immediate consequence of the definition of $\LT$
together with the fact that the test function $\1^m$ has dimension
$m[\varphi] \le d-\alpha \le d_+^\varnothing$ for $m = 1, 2$.
The second statement follows similarly using the fact that
the dimension $[\varphi]$ of $\1^1$ is equal to $d_+^a = d_+^b$ if $j < j_{ab}$.
\end{proof}

\section{RG map}
\label{sec:RGmap}

In the absence of observables, i.e., with $\sigma_a=\sigma_b=0$,
the RG map for the long-range models is constructed
and bounded in \cite{Slad17} using the main theorem of \cite{BS-rg-step}.
The result is given in \cite[Theorem~\ref{alpha-thm:step-mr}]{Slad17}.
The extension of this construction to the case of nonzero
observable fields $\sigma_a,\sigma_b$
follows a similar route as in the 4-dimensional nearest neighbour case in \cite{BBS-saw4,ST-phi4},
as we now explain.  The coordinates for the RG map are discussed in Section~\ref{sec:RGcoords},
the domain of the RG map is discussed in Section~\ref{sec:RGdom}, and the main estimates
for the RG map are given in Theorem~\ref{thm:step-mr}.
These estimates, combined with a new estimate derived from a cluster expansion,
are used in Sections~\ref{sec:lambda}--\ref{sec:qflow} to control the flow
generated by the RG map.

\subsection{RG coordinates}
\label{sec:RGcoords}

The RG
map will be defined so as to express the sequence $Z_j$ defined by \refeq{Z0def}
 as
\begin{equation}
\lbeq{zzeta}
    Z_j = e^{\zeta_j} (I_j \circ K_j)(\Lambda),
\end{equation}
for a real sequence $\zeta_j$ and sequences of maps $I_j : \Pcal_j \to \Ncal$
and $K_j : \Pcal_j \to \Ncal$.
The \emph{perturbative coordinate} $I_j$ is an explicit function of $V_j \in \Vcal$,
and
\begin{equation}
\lbeq{zetadef}
    \zeta_j = -u_j |\Lambda| + \tfrac12 (q_{a,j} + q_{b,j}) \sigma_a\sigma_b.
\end{equation}
The \emph{nonperturbative coordinate} $K_j$ is discussed in detail below.
By \eqref{e:V0Z0}, \refeq{zzeta} holds at scale $j = 0$ with
$u_0 = q_{x,0} = 0$, $I_0 = e^{-V_0}$, and $K_0 = \1_\varnothing$.
We sometimes write an element of $\Ucal$ as $U = (\zeta, V)$
with $V \in \Vcal$, where $\zeta$ encodes $u, q_a, q_b$.

We express the map
$Z_j \mapsto Z_{j+1}$ of \refeq{Z0def} via a map
$(V_j, K_j) \mapsto (\delta\zeta_{j+1}, V_{j+1}, K_{j+1})$,
the \emph{renormalisation group (RG) map},
in such a manner that
\begin{equation}
\lbeq{EIK}
\Ex_{j+1}\theta (I_j \circ K_j)(\Lambda)
    =
e^{\delta\zeta_{j+1}} (I_{j+1} \circ K_{j+1})(\Lambda)
\end{equation}
with $I_{j+1} = I_{j+1}(V_{j+1})$ and $\delta\zeta_{j+1} = \zeta_{j+1} - \zeta_j$.
This ensures that $Z_{j+1}$ has the form \eqref{e:zzeta} with
$\zeta_{j+1} = \zeta_j + \delta\zeta_{j+1}$.

\subsubsection{Perturbative coordinate}

The form of the perturbative coordinate $I_j$ is as follows.
Given a $\Lambda \times \Lambda$ matrix $w$, we define the
operator
$    \Lcal_w = \frac 12
    \sum_{u,v \in \Lambda}
    w_{u,v}
    \sum_{i=1}^n
    \frac{\partial}{\partial \varphi_{u}^i}
    \frac{\partial}{\partial \varphi_{v}^i}$.
Recall the projections defined in Section~\ref{sec:Ncal}.
Given $V',V'' \in \Vcal$, we also define
\begin{align}
\label{e:FCAB}
    F_{w}(V',V'')
    &= e^{\Lcal_{w}}
    \big(e^{-\Lcal_{w}}V'\big)
    \big(e^{-\Lcal_{w}}V'' \big) - V'V'',
\\
\label{e:Fpi}
    F_{\pi ,w}(V',V'')
    &=
    F_{w}(V',\pi_\varnothing V'')
    + F_{w}(\pi_* V',V'').
\end{align}
For $j \ge 0$, we write the partial sums of the covariance decomposition as
\begin{equation}
\label{e:wjdef}
        w_j = \sum_{i=1}^j C_i, \qquad w_0=0.
\end{equation}
As in \cite[\eqref{pt-e:WLTF}]{BBS-rg-pt}, for $B \in \Bcal_j$ we define
\begin{equation}
\lbeq{Wdef}
    W_j(V,x)
    = \frac 12 (1-\Loc_x) F_{\pi,w_j}(V_x,V(\Lambda)),
    \quad\quad
    W_j(V,B)
    = \sum_{x \in B} W_j(V,x).
\end{equation}
The polynomial $W_j(V, B)$ in the fields is thus an explicit quadratic function of $V$.
In particular, $W_j^\varnothing$ is an even polynomial in the fields, and
$W_j$ is quadratic in the coupling constants and is irrelevant in the RG sense.
Finally, for $V\in\Vcal$, we define
$I_j = I_j(V, \cdot\,) : \Pcal_j \to \Ncal$  by
\begin{equation}
\lbeq{Idef}
I_j(V, X)
= e^{-V(X)} \prod_{B\in\Bcal_j(X)} (1 + W_j(V, B)).
\end{equation}

As in \refeq{corrdiff}, we write
 $D_{\sigma_a}=\frac{\partial}{\partial \sigma_a}|_{\sigma_a = \sigma_b = 0}$.
 We also write
$\bar D = \sum_{x\in\Lambda} \frac{\partial}{\partial \varphi_x^1}|_{\varphi=0}$
and $\bar D^2 = \sum_{x,y\in\Lambda} \frac{\partial^2}{\partial \varphi_x^1 \partial \varphi_y^1}|_{\varphi=0}$.
We will later make use of the following corollary of Lemma~\ref{lem:Loc}.

\begin{cor}
\label{cor:W0}
For $V \in \Vcal$ and $x \in \Lambda$, and with $W_j=W_j(V,x)$,
\begin{equation}
\bar D W^\varnothing_j = \bar D^2 W^\varnothing_j = 0.
\end{equation}
Moreover, if $j < j_{ab}$, then
\begin{equation}
\bar D D_{\sigma_a} W_j = 0.
\end{equation}
\end{cor}

\begin{proof}
The fact that $\bar D W^\varnothing_j = 0$ is immediate since $W_j^\varnothing$
is even in the fields.
Also, since $\bar D^2 W_j^\varnothing =
\langle W^\varnothing_j, \1^2 \rangle_0$, it follows from Lemma~\ref{lem:Loc}
that $\bar D^2 W_j^\varnothing =0$.
To see that $\bar D D_{\sigma_a} W_j = 0$, note that
$D_{\sigma_a} W_j = \tfrac12 (1 - \LTbar^a_x) F_{a}$, with
$F_a \in \Ncal^\varnothing$ the coefficient of $\sigma_a$ in $F_{\pi,w_j}(V_x, V(\Lambda))$.
Thus, by definition \eqref{e:pairing} of the
pairing, $\bar D D_{\sigma_a} W_j = \tfrac12 \langle (1 - \LTbar^a_x) F_{a}, \1^1 \rangle_0$,
which vanishes by Lemma~\ref{lem:Loc} when $j < j_{ab}$. This completes the proof.
\end{proof}

\subsubsection{Nonperturbative coordinate}

We now define the space $\Kcal_j$
of maps $K : \Pcal_j \to \Ncal$ which contains the nonperturbative RG
coordinate. With $\Ncal$ replaced by $\Ncal^\varnothing$, such a space is
defined in \cite[Definition~\ref{alpha-def:Kspace}]{Slad17}, and, as in \cite{BS-rg-step},
we extend it here
to include observables.
The symmetries (Euclidean covariance, gauge invariance, supersymmetry, and $O(n)$-invariance)
used in Definition~\ref{def:Kspace} are defined in
\cite[Section~\ref{step-sec:coordinates}]{BS-rg-step} and \cite[Section~\ref{phi4-log-sec:margrel}]{BBS-phi4-log}.
For $n \ge 1$, as a replacement for the gauge invariance which holds for $n=0$
we also introduce \emph{sign invariance}, which
is invariance under the map $(\sigma,\varphi) \mapsto (-\sigma,-\varphi)$.
Note that $V_0$ of \refeq{V0Z0} is sign invariant.  It can be verified that the property
of sign invariance is preserved by the map $K \mapsto K_+$ of \cite{BS-rg-step}.

\begin{defn}
\label{def:Kspace}
For $j<N$, let $\CKspace_{j} = \CKspace_{j} (\Lambda)$
denote the real vector space of functions
$K : \Ccal_j \to \Ncal$ with the following properties:
\begin{itemize}
\item Field Locality:
For all $X\in \Pcal_j(\Lambda)$, $K (X) \in \Ncal (X^{\Box})$. Also,
(i) $\pi_{a} K (X)=0$ unless $a\in X$,
(ii) $\pi_{b} K (X)=0$ unless $b\in X$, and
(iii) $\pi_{ab} K (X)=0$ unless $a\in X$ and $b\in X^\Box$ or vice versa,
    and $\pi_{ab} K(X)=0$ if $X\in \Scal_j$ and $j<j_{ab}$.

\item Symmetry:
(i) $\pi_\varnothing K$ is Euclidean covariant,
(ii) if $n = 0$, $\pi_\varnothing K$ is supersymmetric and $K$ is gauge invariant
    and has no constant part;
if $n \ge 1$, $\pi_\varnothing K$ is $O(n)$-invariant and $K$ is sign invariant.
\end{itemize}
Let $\Kspace_{j} = \Kspace_{j} (\Lambda)$ be the real vector space
of functions $K : \Pcal_j  \to \Ncal$ which have
the above field locality and symmetry properties, and, in addition:
\begin{itemize}
\item Component Factorisation: for all polymers $X$, $K (X) = \prod_{Y
\in {\rm Comp}( X)}K (Y)$.
\end{itemize}
\end{defn}

The nonperturbative coordinate $K_j$ appearing in \refeq{zzeta} is an element
of $\Kspace_j$.
An element of $\Kspace_{j}$ determines an element of $\CKspace_{j}$
by restriction to $X\in\Ccal_j$.  Also, an element of $\CKspace_{j}$
determines an element of $\Kspace_{j}$ by the factorisation property.
The same symbol is used for both elements related by this
correspondence.  Since the empty set is not a connected set,
$\1_{\varnothing} \in \Kspace_{j}$ becomes $0 \in \CKspace_{j}$
under this correspondence.

\subsection{Norms and RG domain}
\label{sec:RGdom}

We now specify the domain of the RG map, which requires specification of norms
on the spaces $\Vcal$ and $\CKspace$.
Without the observables fields, the norms
are discussed in \cite[Section~\ref{alpha-sec:nr}]{Slad17}.
For the nearest-neighbour 4-dimensional case,
the adaptation of the norms to include
observables is discussed,
e.g., in \cite[Section~\ref{phi4-sec:par}]{ST-phi4}.  For our current long-range setting, we need
only adjust some norm parameters, compared to \cite[Section~\ref{phi4-sec:par}]{ST-phi4}.

As in \cite[\eqref{alpha-e:sfix}]{Slad17}, the small number $\gLfix$ in Theorem~\ref{thm:2ptfcn}
is given by
\begin{equation}
\lbeq{sfix}
        \gLfix
        =\frac 1a (1-L^{-\epsilon})
        = O(\epsilon),
\end{equation}
with the constant $a$ specified in
\cite[Lemma~\ref{alpha-lem:beta-a0}]{Slad17} (not to be confused with the point $a\in \Zd$ used
for the two-point function).
Recall the mass scale $j_m$ defined in \refeq{jmdef}.
Following \cite[\eqref{alpha-e:alphapdef}--\eqref{alpha-e:elldefa}, \eqref{alpha-e:hdef}]{Slad17},
we fix
\begin{equation}
\lbeq{alphapdef}
        \alpha'
        \in (0, \half \alpha),
\end{equation}
and define the bulk parameters
\begin{align}
\lbeq{elldefa}
        \ell_j &
        = \ell_0 L^{-\frac 12 (d-\alpha)j} L^{- \frac 12 (\alpha+\alpha') (j-j_m)_+}
        =
        \begin{cases}
        \ell_0 L^{-\frac 12 (d-\alpha)j} & (j \le j_m)
        \\
        \ell_0 L^{-\frac 12 (d-\alpha)j_m}L^{-\frac 12 (d+\alpha')(j-j_m)} & (j > j_m),
        \end{cases}
        \\
        \lbeq{hdef}
        h_j &
        = \frac{1}{\gLfix^{1/4}}k_0 L^{-\frac 12 (d-\alpha)j}
        = \frac{1}{\gLfix^{1/4}}\frac{k_0 }{\ell_0} \ell_j
        \quad\quad
        (j \le j_m)
        .
\end{align}
Here $\ell_0$ can be chosen large (depending on $L$) and $k_0$ is a fixed (small) constant.
We use $\h_j$ to refer to either of the bulk parameters $\ell_j,h_j$.

Now that observables are present, the pair of parameters $\h_j$ is supplemented by
the pair
\begin{equation}\label{e:weightsigma}
        \h_{\sigma,j} =
         \ell_{j \wedge j_{ab}}^{-1}
        2^{(j - j_{ab})_+}
        \times
        \begin{cases}\gLfix & (\h = \ell)
        \\
        \gLfix^{1/4} & (\h = h).
        \end{cases}
\end{equation}
We only use $h_{\sigma,j}$ for $j \le j_m$. Recall that we assume that
the coalescence scale $j_{ab}$ is smaller than the mass scale $j_m$,
since the limit $m^2 \downarrow 0$ will be taken before considering arbitrarily large
$|a-b|$.

For $U \in \Ucal \simeq \R^7$, we define the scale-dependent norm
\begin{equation}
\lbeq{Ucalnorm}
    \|U\|_\Ucal = \max
    \Big\{
    |g| L^{\epsilon (j \wedge j_m)},
    |\nu| L^{\alpha (j \wedge j_m)},
    |u| L^{j d},
    \ell_j \ell_{\sigma,j} (|\lambda_a| \vee |\lambda_b|),
    \ell_{\sigma,j}^2 (|q_a| \vee |q_b|)
    \Big\}
    .
\end{equation}
We denote the restriction of $\|\cdot\|_\Ucal$ to $\Vcal$ by the same symbol.
Given $C_\DV >0$, we
define the domain
\begin{equation}
\lbeq{DVdef-obs}
\DV_j =
\{
    V \in \Vcal :
    \|V\|_\Ucal \le C_\DV \gLfix, \;
    g >
    C_\DV^{-1} \gLfix L^{-\epsilon (j \wedge j_m)}
\}
    \subset \Vcal
    .
\end{equation}
Note that $\DV_j$ is a domain in $\Vcal$, and as such, does not
involve the coupling constants $u$ or $q$.

A sequence $\Wcal^\varnothing_j$ of Banach spaces is defined
in terms of the $T_\varphi(\h_j)$ seminorms in \cite{Slad17} (they are
denoted $\Wcal_j$ there). We extend $\Wcal^\varnothing_j$ to a space
$\Wcal_j \subset \CKspace_j$ whose definition is the same with the exception
that the $T_\varphi$ seminorms are defined on the extended space $\Ncal$.
As in \cite[Remark~\ref{alpha-rk:chiL}]{Slad17}, we define a sequence
\begin{equation}
\lbeq{chiLdef}
\chiL_j = L^{-\tfrac14 \alpha (j - j_m)_+}.
\end{equation}
Given a parameter $\DVa > 0$, the domain of the RG map is defined by
\begin{equation}
\lbeq{domRG}
\domRG_j = \DV_j \times B_{\Wcal_j}(\DVa \chicCov_j^3 \gLfix^3),
\end{equation}
where $B_{\Wcal_j}(r)$ is the open ball of radius $r$ in the Banach space $\Wcal_j$.

\subsection{Estimates on RG map}

We now specify the RG map $(V_j, K_j) \mapsto (U_{j+1},K_{j+1})=
(\delta\zeta_{j+1}, V_{j+1}, K_{j+1})$
and state our bounds on it.
To shorten notation, we condense indices and write, e.g.,  $(V,K) $ for $(V_j, K_j)$
and $(U_+,K_+) $ for $(U_{j+1},K_{j+1})$.
The definition of the maps $U_+,K_+$ is described in a general setup in
\cite{BBS-rg-pt,BS-rg-step}, and is adapted to the long-range model
with $\sigma_a=\sigma_b=0 $ in \cite{Slad17}.
The same definitions extend to include observables.

In particular, the map $(V,K) \mapsto U_+=(\delta\zeta_+, V_+) = \PT(V) + R_+(V,K)$
is explicit and consists of a perturbative part $\PT$, incorporating
second-order perturbation theory, and a nonperturbative, third-order
error $R_+$.
The explicit map $\PT $ is the one defined in \cite{BBS-rg-pt} for $n=0$, extended in
\cite{BBS-phi4-log} to $n \ge 1$,
and used in \cite{ST-phi4} for general $n \ge 0$.
Let $\lambda$ denote $\lambda_a$ or $\lambda_b$, and let
$q$ denote $q_a$ or $q_b$. We denote
the $\lambda, q$ components of the map $\PT$ by $\lambda_\pt, \delta q_\pt$.
For $j<N$, and with $w_j$ given by \refeq{wjdef}, let
\begin{equation}
\lbeq{w1def}
    w_j^{(1)} = \sum_{y \in \Zd}w_{j;x,y}.
\end{equation}
By \cite[\eqref{alpha-e:Greeknoprime} and Lemma~\ref{alpha-lem:wlims}]{Slad17},
\begin{equation}
\label{e:w1bd}
    w_j^{(1)} = O(L^{\alpha j}).
\end{equation}
Let
\begin{equation}
\lbeq{delnuw}
    \delta[\nu w^{(1)}] = (\nu + (n+2) g C_{+;0.0})w_{+}^{(1)} - \nu w^{(1)}.
\end{equation}
Recall the definition of the coalescence scale $j_{ab}$ in \refeq{jabdef}.
Then, as in \cite[Proposition~\ref{phi4-prop:pt}]{ST-phi4}, for general $n \ge 0$ the observable part of
the map $\PT$ is the map  $V \mapsto (\lambda_\pt, \delta q_\pt)$
given by
\begin{align}
    \lambdapt
    &
    =
    \begin{cases}
    (1 - \delta[\nu w^{(1)}])\lambda & (j+1 < j_{\pp\qq})
    \\
    \lambda & (j+1 \ge j_{\pp\qq}),
    \end{cases}
\label{e:lambdapt2}
    \\
    \delta \qpt
    &
    =
    \lambda_a\lambda_b \,C_{j+1;\pp,\qq}
\label{e:qpt2}
    .
\end{align}
Note that $\lambda_\pt = \lambda$ for all scales $j \ge j_{ab}-1$, i.e.,
the flow of $\lambda$ stops evolving after scale $j_{ab}-1$.
Conversely, since $C_{j+1;\pp,\qq}=0 $ for $j+1\leq j_{ab} $,
nonzero $\delta q_{\pt}$ can occur only at scales $j \ge j_{ab}$.
The map $(V,K) \mapsto U_+$ is now defined by
\begin{equation}
\lbeq{Vhatdef}
    U_+ = \PT(\hat V) \quad \text{with} \quad
    \hat V = V -  \!\!\!\! \sum_{Y \in \Scal (\Lambda) : Y \supset B} \!\!\!\!
    \LT_{Y,B} I^{-Y} K (Y) .
\end{equation}
The localisation operator $\LT_{Y,B}$ is defined in \cite[Definition~\ref{loc-def:LTXYsym}]{BS-rg-loc}.
The higher-order correction $R_+ : \Vcal \to \Ucal$ to the
perturbative calculation is then defined by
\begin{equation}
R_+(V, K) = \PT(\hat V) - \PT(V),
\end{equation}
so that $U_+= \PT(V)+R_+(V,K)$.
We do not need the explicit form of $R_+$
and only use the bounds of Theorem~\ref{thm:step-mr} below.

The map $(V, K)\to K_+$ is also given explicitly in \cite{BS-rg-step}, but
it is complicated to write down.  Like $R_+$, this nonperturbative part of the RG map is
of order $O(\gLfix^3)$.
It is part of the statement of Theorem~\ref{thm:step-mr} below that the
formula for $K_+$ constructed in \cite{BS-rg-step} is well-defined on the domain
specified in Theorem~\ref{thm:step-mr}.  We do not need to know more here about $K_+$
than the estimates provided by Theorem~\ref{thm:step-mr}.

The RG map depends on the mass $m^2$ through its dependence on the
covariance $C_{+}$. We require continuity in the mass
in the limit $m^2\downarrow 0$,
which can only be taken after the infinite-volume limit $N\to\infty$.
Given small $\delta > 0$, we define
the mass domain for the RG map by
\begin{equation}
\lbeq{massint}
    \Iint_j = \begin{cases}
    [0,\delta] & (j<N)
    \\
    [\delta L^{-\alpha(N-1)},\delta] & (j=N).
    \end{cases}
\end{equation}
The special attention to $j=N$ is due to the fact that the final covariance $C_{N,N}$
is only defined for $m^2>0$, and it obeys good estimates for $m^2\in \Iint_N$.

The following extends \cite[Theorem~\ref{alpha-thm:step-mr}]{Slad17}
to allow for the presence of observables.
Its estimates appear
identical to
\cite[Theorem~\ref{alpha-thm:step-mr}]{Slad17}, but it is in fact an extension
since the domain and range of the RG map now include
observables in $(V,K)\in\domRG$, as well as in $R_+$ and $K_+$.
Note that the map $R_+$, which acts on $(V,K)$ with
$V \in \Vcal$, produces a polynomial in $\Ucal$ which in particular contains
the nonperturbative contributions to $\delta \zeta$.  The bound \eqref{e:Rmain-g}
on $R_+$ controls these nonperturbative contributions to $\delta \zeta$.
Note that the estimates \refeq{Rmain-g} hold for $m^2 \in \Iint_+$, but
the continuity is in the smaller interval $m^2 \in [0,L^{-\alpha j}]$.  A restriction
like this on the continuity interval is essential, because larger $m^2$ will put
$j$ above the mass scale, at which point the spaces themselves become dependent on $m^2$
through their dependence on $\ell_j$ and a continuity statement becomes meaningless.

\begin{theorem}
\label{thm:step-mr}
    Let $d =1,2,3$; $n \ge 0$; $\alpha = \frac 12 (d+\epsilon)$ and $j<N$.
    Let $C_\DV$ and $L$ be sufficiently large, and let $\epsilon$ be sufficiently small.
    There exist $\CRG>0$, $\delta >0$,
     such that,
    with the domain
    $\domRG$ defined using $\DVa =4 \CRG$, the maps
    \begin{equation}
    \lbeq{RKplusmaps}
        R_+:\domRG  \times \Iint_+  \to \Ucal,
        \quad
        K_+:\domRG  \times \Iint_+  \to \Wcal_{+}
    \end{equation}
    are analytic in $(V,K)$, provide a solution to \refeq{EIK},
    and satisfy the estimates
    \begin{align}
\label{e:Rmain-g}
        \|R_+\|_{\Ucal_+}
        & \le
        \CRG
        \chiL_+  \gLfix^{3} ,
\quad\quad
        \|K_+\|_{\Wcal_+}
        \le
        \CRG  \chiL_+^3  \gLfix^{3}.
    \end{align}
The coordinate in $R_+ $ corresponding to $\delta q_{a,j},\delta q_{b,j} $
is identically zero for $j\leq j_{ab} $, and the coordinate corresponding to
$\lambda_{a,j},\lambda_{b,j} $ is identically zero for $j \ge j_{ab}$.
In addition, $R_+,K_+$ are jointly continuous in $m^2 \in [0,L^{-\alpha j}], V,K$.
\end{theorem}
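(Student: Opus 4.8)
\subsection*{Proof proposal for Theorem~\ref{thm:step-mr}}

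The strategy is to obtain Theorem~\ref{thm:step-mr} by combining the bulk RG-step construction of \cite{Slad17} (itself built on the general single-step theorem of \cite{BS-rg-step}) with the treatment of observable fields developed for the nearest-neighbour four-dimensional models in \cite{BBS-saw4,ST-phi4}, adjusting only the norm parameters to the long-range scaling. The construction of $R_+$ via \refeq{Vhatdef} and of $K_+$ by the formula of \cite{BS-rg-step} is purely algebraic and extends verbatim once the function space $\Ncal^\varnothing$ is replaced by the decomposed space $\Ncal = \Ncal^\varnothing\oplus\Ncal^a\oplus\Ncal^b\oplus\Ncal^{ab}$ and $\Loc$ by its extension to $\Ncal$. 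Since \cite{Slad17} already establishes that \refeq{EIK} is solved and that all the required estimates hold for the bulk part (i.e.\ on $\Ncal^\varnothing$), the first step is to isolate precisely which hypotheses of the underlying theorem involve the observable sectors, and then to verify each of them with the weights $\h_{\sigma,j}$ of \eqref{e:weightsigma}, the $\Ucal$-norm \refeq{Ucalnorm}, and the domain $\domRG$ of \refeq{domRG}.

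Concretely, I would carry out three checks. First, verify that $\Ncal$ equipped with the seminorm \refeq{Tphiobs} is a suitable module for the abstract machinery: that the projections $\pi_\Ncallab$ are bounded, that $\Loc$ acts on $\Ncal$ with the ranges $d_+^\Ncallab$ specified above (with the coalescence-scale choice $d_+^a=d_+^b=[\varphi]$ for scales $j<j_{ab}$ and $=0$ for $j\ge j_{ab}$, $d_+^{ab}=0$), and that $\Ncal_j$ is contained in the extended Banach space $\Wcal_j$; the vanishing properties that make the observable flow behave well are exactly Lemma~\ref{lem:Loc} and Corollary~\ref{cor:W0}. Second, confirm the perturbation-theory bounds for the observable part $(\lambda_\pt,\delta q_\pt)$ of $\PT$, given by \eqref{e:lambdapt2}--\eqref{e:qpt2}: these are the long-range analogue of \cite[Proposition~\ref{phi4-prop:pt}]{ST-phi4}, and the input $w_j^{(1)} = O(L^{\alpha j})$ from \eqref{e:w1bd} shows that the contributions $\delta[\nu w^{(1)}]\lambda$ and $\lambda_a\lambda_b C_{j+1;a,b}$ have the sizes dictated by the observable entries of \refeq{Ucalnorm}, so that $\PT$ maps $\DV_j$ into $\DV_{j+1}$.

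Third, and this is the substantive part, propagate the observable weights through the three families of estimates of \cite{BS-rg-step}: the scale-change bound for $I_j = e^{-V}\prod(1+W_j)$, the stability estimate, and the crucial contraction for $K_+$. One checks that the exponents of $\gLfix$, of $\ell_{j\wedge j_{ab}}$, and of $2^{(j-j_{ab})_+}$ in \eqref{e:weightsigma} are chosen so that the \emph{same} numerical margins as in the bulk analysis survive, now with field dimension $[\varphi]=\frac{d-\alpha}{2}$ in place of the four-dimensional value $1$, and with the factor $\chiL_j$ of \refeq{chiLdef} absorbing the enhanced covariance decay above the mass scale. The stated vanishing claims then drop out: the $\delta q$-component of $R_+$ vanishes for $j\le j_{ab}$ because $C_{j+1;a,b}=0$ there by \eqref{e:frp}, and the $\lambda$-component of $R_+$ vanishes for $j\ge j_{ab}$ because then $\lambda_\pt=\lambda$ in \eqref{e:lambdapt2} and, via the field-locality clause of Definition~\ref{def:Kspace} and the choice $d_+^a=d_+^b=0$, the $\hat V$-correction in \refeq{Vhatdef} no longer feeds the $\lambda$-coordinate.

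For the joint continuity in $m^2\in[0,L^{-\alpha j}]$ I would argue exactly as in the bulk case: the covariance $C_{j+1}$, and through it $\PT$, $W_j$, $I_j$, and the $\Loc$-operators, depend continuously on $m^2$ on this interval (indeed analytically for $m^2>0$, and continuously up to $m^2=0$ once the volume has been sent to infinity), and the implicit construction of $K_+$ in \cite{BS-rg-step} depends continuously on the covariance within a regime in which the norm parameters $\ell_j$, $\h_{\sigma,j}$, $\chiL_j$ are frozen; restricting to $m^2\le L^{-\alpha j}$ keeps scale $j$ below the mass scale, which is precisely what freezes those parameters and makes the continuity statement meaningful. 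The step I expect to be the main obstacle is the third one: verifying that \eqref{e:weightsigma} is the correct \emph{joint} choice of observable weights --- in particular that the prefactor $\ell_{j\wedge j_{ab}}^{-1}$, needed to control the $\sigma$-derivatives at the coalescence scale, is compensated downstream, and that the growth $2^{(j-j_{ab})_+}$ past the coalescence scale is dominated by the decay of $C_{j;a,b}$ --- since any mismatch there would break the contraction estimate for $K_+$ even though the entire bulk analysis of \cite{Slad17} remains untouched.
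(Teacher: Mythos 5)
Your proposal follows essentially the same route as the paper: cite the bulk RG-step result of \cite{Slad17}/\cite{BS-rg-step}, then verify the observable-sector modifications (stability of $I_j$, the small parameter $\epdV$, the crucial contraction, and mass continuity on $[0,L^{-\alpha j}]$) against the long-range norm parameters \eqref{e:weightsigma}, with the vanishing claims for the $\lambda$- and $\delta q$-components following from the $d_+^\Ncallab$ choices and the finite-range property \eqref{e:frp}, exactly as you describe. The one place where you merely indicate the work while the paper actually does it is the above-mass-scale regime: there the observable terms degrade the monomial-comparison bound \refeq{F12bds} for $F_2$ by a factor $L^{\alpha'(j-j_m)_+}$, and the paper shows this loss can be absorbed into $\chiL^2$ inside $\epdV^2$ (and that $\pi_{ab}W=0$ prevents losing the factor twice), a computation your sketch does not single out.
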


\begin{proof}
The theorem is a consequence of the main result of \cite{BS-rg-step},
which focusses on the 4-dimensional nearest-neighbour case.
For the long-range model, the appropriate modifications for the bulk part of the
RG map are discussed in \cite{Slad17}, and we assume familiarity with both the methodology
and the modifications.
In order to include observables,
only minor further
modifications are required, compared to \cite{BS-rg-IE,BS-rg-step}.

One requirement is to verify that, for $V\in \DV$,
the basic small parameters $\epsilon_V $ and $\epdV$ obey appropriate estimates
when observables are present, as in \cite[Sections~\ref{IE-sec:epVW}--\ref{IE-sec:epdV-app}]{BS-rg-IE}.
We verify this here; this verification validates our choice \refeq{weightsigma}
for the norm parameters.
(In fact, somewhat larger domains are used in \cite[Sections~\ref{IE-sec:epVW}--\ref{IE-sec:epdV-app}]{BS-rg-IE};
the main ideas are present for $V\in\DV$, which we consider here,
and the extension to the larger domains
is a matter of bookkeeping.)
A second requirement is to verify that the
``crucial contraction'' is maintained in the presence of observables,
and we also verify this here.

\medskip \noindent \emph{Bound on $\epsilon_V$.}
Let $V\in \DV$.
For $\epsilon_V$, it suffices to observe that for
$|\lambda_a| \le C_\DV \gLfix \ell^{-1}\ell_\sigma^{-1}$,
\begin{align}
\label{e:lam-stability}
        \|\lambda_a \sigma_a  \varphi_a^1\|_{T_0(\h)}
        & =
        |\lambda_a| \h_{\sigma}\h
        \le
        C_\DV \gLfix \frac{\h_{\sigma}}{\ell_\sigma} \frac{\h}{ \ell}
        =
        \begin{cases}
        C_\DV \gLfix & (\h=\ell)
        \\
        C_\DV k_0 \ell_0^{-1}  & (\h=h),
        \end{cases}
\end{align}
which implies stability on the domain $\DV$ of
\refeq{DVdef-obs}, and complements the arguments of \cite{BS-rg-step,Slad17}.

\medskip \noindent \emph{Bound on $\epdV$.}
We must also verify the analogue of \cite[Lemma~\ref{IE-lem:epdV}]{BS-rg-IE}.
To state the desired estimate, as in \cite[\eqref{alpha-e:ellhatdef}]{Slad17}
we define the norm parameter
\begin{equation}
\lbeq{ellhatdef}
\hat\ell_j
    =
\hat\ell_0 \ell_j L^{-\frac 12 (\alpha-\alpha')(j-j_m)_+},
\end{equation}
and as in \cite[\eqref{alpha-e:epdVdef}]{Slad17} we define the small parameter
\begin{align}
\lbeq{epdVdef}
\epdV &= \epdV_j(\h) =
\begin{cases}
\gLfix \chiL_j & (\h=\ell)
\\
\gLfix^{1/4} & (\h=h,\; j \le j_m)
.
\end{cases}
\end{align}
We write $U_\pt = \PT(V)$
and let
$\delta V = \theta V - U_\pt$.
Our goal then is to show that, for $V \in \DV$,
\begin{equation}
\label{e:dVbd}
\max_{B\in\Bcal} \|\delta V(B)\|_{T_0(\h\sqcup\hat\ell)}
    \le
C_{\delta V} \epdV
\end{equation}
where $C_{\delta V}$ is an $L$-dependent constant.

It is argued in \cite[Section~\ref{alpha-sec:epdV}]{Slad17}
that \eqref{e:dVbd} holds with $\delta V$ replaced by
$\delta V^\varnothing = \pi_\varnothing \delta V$.
Thus, it suffices to establish \eqref{e:dVbd} with $\delta V$ replaced by
$\delta V^* = \pi_* \delta V$. This can be done by writing
\begin{equation}
\|\delta V^*(B)\|_{T_0(\h\sqcup\hat\ell)}
    \le
\|\theta V^*(B) - V^*(B)\|_{T_0(\h\sqcup\hat\ell)}
    +
\|V^*(B) - U_\pt^*(B)\|_{T_0(\h\sqcup\hat\ell)}
\end{equation}
and applying the triangle
inequality to estimate each of the two terms on the right-hand side.

For instance, if $a \in B$, then the $\sigma_a$ term of $\theta V^*(B) - V^*(B)$
is $\lambda_a \sigma_a \zeta^1_a$.
By definition of the norm, by  \eqref{e:ellhatdef}, \eqref{e:weightsigma},
\eqref{e:epdVdef}, \refeq{chiLdef}, and by the fact that $\alpha' < \tfrac12 \alpha$,
\begin{equation}
\|\lambda_a\sigma_a\zeta^1_a\|_{T_0(\h\sqcup\hat\ell)}
    = |\lambda_a|\h_{\sigma}\hat\ell
    \le
C_\DV \gLfix  \hat\ell_0 L^{-\frac12 (\alpha - \alpha') (j - j_m)_+} \frac{\h_\sigma}{\ell_\sigma}
    \le
C_\DV \hat\ell_0 \epdV(\h).
\end{equation}
The  $\sigma_a$ term of $V^*(B) - U_\pt^*(B)$ is zero above the coalescence scale,
whereas if $j+1 < j_{ab}$ then it
is $\delta[\nu w^{(1)}] \lambda_a \sigma_a \varphi^1_a$, by \refeq{lambdapt2}.
Thus, by \eqref{e:lam-stability}, it is sufficient to show that
\begin{equation}
\lbeq{delnuw-suff}
|\delta[\nu w^{(1)}]| \le \bar s^{1/4} \chiL.
\end{equation}
By its definition in \refeq{delnuw},
\begin{equation}
\label{e:deltanuw-re}
\delta[\nu w^{(1)}]
    =
\nu\sum_x C_{+:0,x} + (n + 2) g C_{+;00} w^{(1)}_+.
\end{equation}
By \eqref{e:DVdef-obs} and \eqref{e:scaling-estimate-jm},
and the finite-range property \refeq{frp},
the first term is bounded by
\begin{equation}
O(\bar s) L^{-\alpha (j \wedge j_m)} L^{jd} L^{-(d - \alpha) j - 2 \alpha (j - j_m)_+}
    =
O(\bar s) L^{-\alpha (j - j_m)_+} ,
\end{equation}
and the second term is bounded by
\begin{equation}
O(\bar s) L^{-\epsilon (j \wedge j_m)} L^{-(d - \alpha) j - 2 \alpha (j - j_m)_+} L^{\alpha j}
    =
O(\bar s) L^{- d (j - j_m)_+}.
\end{equation}
These bounds do better than what is required by \refeq{delnuw-suff}.

For the $\sigma_a\sigma_b$ term, we can take $j \ge j_{ab}$.
The $\sigma_a \sigma_b$ term of $\theta V^* - V^*$ is always $0$ and the
coefficient of
$\sigma_a\sigma_b$ in $V^*(B) - U_\pt^*(B)$
is at most
$|C_{+;a,b} \lambda_a\lambda_b|$.
By \eqref{e:DVdef-obs}, \eqref{e:scaling-estimate-jm}, and \eqref{e:elldefa}
and the fact that $\alpha'<\frac 12 \alpha$,
\begin{equation}
\lbeq{Csigsig}
\|C_{+;a,b} \lambda_a\lambda_b \sigma_a\sigma_b\|_{T_0(\h\sqcup\hat\ell)}
    \le
O(\bar s^2)  \frac{|C_{+;a,b}|}{\ell^2} \frac{\h_\sigma^2}{\ell_\sigma^2}
    \le
O(\bar s^2) \chiL^2
\frac{\h_\sigma^2}{\ell_\sigma^2}.
\end{equation}
When $\h = h$ (hence $j < j_m$) this is
$O(\epdV(h)^2)$,
and when $\h = \ell$ it is $O(\epdV(\ell)^2)$.  This is better than what is required for
\refeq{dVbd}.

\medskip \noindent \emph{Crucial contraction.}
The adaptation of the crucial contraction to the long-range model is provided for
the bulk in \cite[Section~\ref{alpha-sec:kappabms}--\ref{alpha-sec:kappapms}]{Slad17}.
We now extend the adaptation to include observables.

Below the mass scale, the least irrelevant of the sign invariant
monomials involving the observable fields
each have two additional spin fields compared to their marginal counterparts
$\sigma_a \varphi_a^1$ and $\sigma_a\sigma_b$
(the latter occurs only above the coalescence scale), so have dimension which is larger
by $2[\varphi]=d-\alpha$.
Compared to \cite[\eqref{alpha-e:kappa}]{Slad17},
this gives rise to $\gamma = L^{-(d-\alpha)}$, and there is no factor $L^{d}$
for observables, so the gain here is proportional to $L^{-(d-\alpha)}$.  The worst $\gamma$ occurs
for $d=1$, where we have $\gamma = L^{-\frac d2 + \frac \epsilon 2}
= L^{-\frac 12 + \frac \epsilon 2}$.
This is consistent with the values of $L^d\cgam$ reported for the bulk in
\cite[\eqref{alpha-e:kappa}]{Slad17}.

Above the mass scale, we extend the discussion in \cite[Section~\ref{alpha-sec:kappapms}]{Slad17}, as follows.
For the perturbative contribution to $K$, we have already verified that we can continue
to use the $\epdV$ given by \refeq{epdVdef} when observables are present, and there is
therefore no change to \cite{Slad17} concerning this issue.  It remains to consider the
crucial contraction.

We recall and invoke our assumption that $j_{ab}<j_m$.
Now $d_+^a=0$, so the least irrelevant monomial in $\Ncal^a$ is
$\sigma_a \varphi$.
This scales as
\begin{align}
    \ell_{\sigma,j}\ell_j & = \frac{\ell_j}{\ell_{j_{ab}}}2^{j-j_{ab}}\gLfix
    =
    \frac{L^{-\frac 12 (d-\alpha)j}L^{-(\alpha+\alpha')(j-j_m)}}
    {L^{-\frac 12 (d-\alpha)j_{ab}}}2^{j-j_{ab}}
    \gLfix
    \nnb
    & \le
    L^{-\frac{1}{2} (d+ \alpha')(j-j_m)} 2^{j-j_m}\gLfix
    .
\end{align}
A change from scale $j$ to scale $j+1$ in the above right-hand side gives rise
to a factor $2L^{-\frac 12 (d+\alpha')}$.
As in
\cite[\eqref{alpha-e:kappajm}--\eqref{alpha-e:kappa-above-jm-5}]{Slad17},
the essential condition here is that
the product of this factor with $\chiL^{-3} = L^{\frac 34 \alpha}$ should be bounded
above by an inverse power of $L$.  This condition is indeed satisfied, since
\begin{equation}
    \half (d+\alpha') - \tfrac{3}{4} \alpha
    =
    \half \left( d+\alpha' - \tfrac{3}{4}(d +\epsilon) \right)
    =
    \half \left( \tfrac{1}{4} d+\alpha' - \tfrac{3}{4}\epsilon) \right) > \tfrac {1}{8} d.
\end{equation}

Similarly, the least irrelevant monomial in $\Ncal^{ab}$ that is
sign invariant is of the form $\sigma_a \sigma_b \varphi \varphi$,
and has scaling dimension
twice that considered in the previous paragraph, so twice as good.  Thus the crucial
contraction is not harmed by the presence of observables.

\medskip \noindent \emph{Estimate for $R_+$ above the mass scale.}
Finally, we consider the extension of \cite[Section~\ref{alpha-sec:Rbound}]{Slad17}
to include observables.
The observable terms have the same $T_0$ and $\Ucal$ norms:
$\|\sigma_a\varphi_a^1\|_{T_0} = \ell_\sigma \ell = \|\sigma_a\varphi_a^1\|_{\Ucal}$
and
$\|\sigma_a\sigma_b\|_{T_0} = \ell_\sigma^2 = \|\sigma_a\sigma_b\|_{\Ucal}$.
This leads to an extension to
\cite[Lemma~\ref{alpha-lem:monnormcomp}]{Slad17},
as follows.
Let
\begin{align}
    F_1
    & =
    \nu\tau + u + - \sigma_a  \varphi_a^1\1_{x=a}  - \sigma_b  \varphi_b^1\1_{x=b}
    - \tfrac12 ( q_a\1_{x=a} + q_b\1_{x=b}) \sigma_a\sigma_b
    ,
    \\
    F_2
    &=
    g\tau^2 + \nu\tau + u + - \sigma_a  \varphi_a^1\1_{x=a}  - \sigma_b  \varphi_b^1\1_{x=b}.
\end{align}
The estimates of
\cite[Lemma~\ref{alpha-lem:monnormcomp}]{Slad17}
now become
\begin{equation}
\lbeq{F12bds}
    \|F_1\|_{\Ucal} \le c_L L^{\alpha'(j-j_m)_+}\|F_1(B)\|_{T_0},
    \quad
    \quad
    \|F_2(B)\|_{T_0} \le c  \|F_2\|_{\Ucal},
\end{equation}
i.e., the bound remains the same for $F_1$ but loses a helpful factor $L^{-\alpha'(j-j_m)_+}$
for $F_2$.  The bound on $F_1$ then implies, as in
\cite[\eqref{alpha-e:Rnormcomp}]{Slad17},
that
\begin{equation}
\lbeq{Rnormcomp}
    \|R_+\|_\Ucal \le O(L^{\alpha'(j-j_m)_+})\|R_+(B)\|_{T_0},
\end{equation}
and the bound \refeq{Rmain-g} follows from this as in
\cite[Section~\ref{alpha-sec:Rbound}]{Slad17}.

The introduction of observables does lead to a change in the bounds on $F,W,P$
in \cite[Lemma~\ref{alpha-lem:W1}]{Slad17}, due to the weakened estimate for $F_2$
in \refeq{F12bds}.  The change is to replace the factors $L^{-(\alpha+\alpha')(j-j_m)_+}$
and $(c/L)^{-(\alpha+\alpha')(j-j_m)_+}$
in the three upper bounds of \cite[Lemma~\ref{alpha-lem:W1}]{Slad17} by the worse factor
$\chiL^2=L^{-\frac 12 \alpha(j-j_m)_+}$.  Since we seek an upper bound
which includes the factor $\chiL^2$ in $\epdV^2$, the weakened bounds
remain more than good enough.

For general reasons, $\pi_{ab}W=0$ \cite[Proposition~\ref{IE-prop:Wbounds}]{BS-rg-IE},
so there can be no such term in $W$.  Thus, in the proof of
\cite[Lemma~\ref{alpha-lem:W1}]{Slad17}, only one factor
$L^{-\alpha'(j-j_m)_+}$ can be lost by application of \refeq{F12bds},
not two.
Also, by direct calculation, the relevant contribution to $F$ is
$F_C(\lambda_a\sigma_a\varphi_a^1,\lambda_b\sigma_b\varphi_b^1)
= \half \lambda_a\lambda_bC_{a,b}\sigma_a\sigma_b$,
whose $T_0(\ell)$ norm
is given as in the first inequality of \refeq{Csigsig} to be at most $L^{-(\alpha-\alpha')(j-j_m)}$,
which is better than the required $\chiL^2$.
The bound on $P$ follows from the bounds on $F,W$
as in \cite[Proposition~4.1]{BS-rg-IE}.
\end{proof}

In the absence of observables, Theorem~\ref{thm:step-mr}
is used in \cite{Slad17} to construct a global RG flow $(g_j,\nu_j,K_j^\varnothing)$
that remains in the RG domain for all $j$.
This requires tuning the initial $\nu$ to a mass-dependent critical value
$\nu_0^c(m^2)$; this value
converges to the critical point $\nu_c(g;n)$ as $m^2 \downarrow 0$
(see \cite[\eqref{alpha-e:nuceta}--\eqref{alpha-e:nuc}]{Slad17}).
Throughout the remainder of the present paper,
we always take $(g_j,\nu_j)$ to be this global flow of coupling constants.  For general
reasons this flow is the same in the presence of observables as in their absence:
see \cite[\eqref{step-e:piVKplus}--\eqref{step-e:plusindep}]{BS-rg-step}.
The main task for the proof of Theorem~\ref{thm:2ptfcn} is to apply the
estimates of Theorem~\ref{thm:step-mr} to control, in addition,
the flow of the observable coupling constants $\lambda$ and $\delta q$,
and the observable part of the coordinate $K$.
The flow of $\delta q$ and $K$
is analysed
as in the 4-dimensional nearest-neighbour case \cite{BBS-saw4,ST-phi4}.

The flow of $\lambda $ is marginal, for the same reasons as in the 4-dimensional case.
In \cite{BBS-saw4,ST-phi4}, the perturbative approximation \refeq{lambdapt2} to the
recursion for $\lambda$ is solved along the lines of the rough computation
\begin{align}
\lambda_j
&=
\prod_{k=1}^{j-1} (1-\delta [\nu w^{(1)}])
=
\exp \left[ \sum_{k=1}^{j-1} \log ( 1-\delta [\nu w^{(1)}]) \right]
\nnb &
\approx \exp \left[ -\sum_{k=1}^{j-1} \delta [\nu w^{(1)}] \right]
=
\exp \left[ -\nu_{j} w_{j}^{(1)}] \right]
\approx 1 - \nu_{j}  w_{j}^{(1)}
.
\label{e:approxlambda}
\end{align}
In \cite{BBS-saw4,ST-phi4}, the errors introduced by the map $R_+$ into \refeq{lambdapt2}
were summable over all scales because of the decay of the marginal coupling constant $g_j$
with the scale (Gaussian fixed point), and the above computation survives the introduction
of these errors.

For the long-range model, the fixed point is non-Gaussian,
and the corrections due to $R_+$ are not summable.
Instead of trying to follow the route laid out in \cite{BBS-saw4,ST-phi4},
we derive an exact relation
between $\lambda_{a,j} $ and the known bulk coupling constants,
similar to \refeq{approxlambda},
which gives better control of its flow than the recursion.
This is done in Section~\ref{sec:lambda}.

\section{Flow of \texorpdfstring{$\lambda$}{lambda}}
\label{sec:lambda}

According to \refeq{lambdapt2} and Theorem~\ref{thm:step-mr}, the flow of $\lambda_{a,j} $
under the RG map is nontrivial only until scale $j_{ab}-1$,
and stops beyond this scale.
Conversely, $q_{a,j}=0$ for $j < j_{ab}$, and the flow of $q_{a,j}$ is
nontrivial only for scales $j \ge j_{ab}$.
Our goal now is to determine the form of the flow until scale $j_{ab}$.
Since we later take the limit $m^2\downarrow 0 $ before studying large
$j_{ab}\sim\log_L \vert a-b\vert $, we can and do assume that $j_{ab}< j_m$.
We will prove the following proposition.

\begin{prop}
\label{prop:lambda}
Let $n \ge 0$, let $L$ be sufficiently large, let $\Lambda_N$ be the torus of period $L^N$,
and let $\epsilon$ be sufficiently small.
Let $g \in [\frac{63}{64}\gLfix,\frac{65}{64}\gLfix]$, and let
$m^2 \in [L^{-\alpha(N-1)}, \delta]$ with $\delta >0$ sufficiently small.
Let $j_{ab}< j_m  < N$.
Let $g_0=g$, let $\nu_0$ be the critical value $\nu_0^c(m^2)$ constructed
in \cite{Slad17}, and let $\lambda_{a,0}=\lambda_{b,0}=1$.
Then the RG map can be iterated to scale $j_{ab}$, i.e., it produces
a sequence $(V_j,K_j)\in \domRG_j$ with initial condition $(V_0,0)$, such that \refeq{zzeta} holds
for all $j \le j_{ab}$ with $I_j=I_j(V_j)$ and $\zeta_j = \sum_{k=1}^j \delta\zeta_j$.
Moreover, $q_{x,j}=0$, and for the component $\lambda_{x,j} $ of this flow
we have the stronger statement
\begin{equation}\label{e:lambdaflow}
        \lambda_{x,j} = 1 -\nu_j w_j^{(1)} +O(\gLfix^2)
        \quad\quad
        (j < j_{ab}, \; x=a,b).
\end{equation}
\end{prop}

The proof of Proposition~\ref{prop:lambda} is given in Section~\ref{sec:intbyparts} below.
Its statement holds trivially at $j=0 $, and will be established inductively for higher scales.
The induction for the bulk quantities $U_j^\varnothing,K_j^\varnothing $ is the result of
\cite{Slad17}, and is unaffected by the presence of observables.

The main additional ingredient for the induction of the observable parts is to establish
the flow \refeq{lambdaflow} of $\lambda_{a,j}$. To achieve this, in Lemma~\ref{lem:ibp}
we use integration by parts to obtain a relation between $\lambda_{a,j} $, quantities
of the bulk flow, and the observable parts of the coordinate $K_j$.
This is achieved by taking suitable derivatives of the identity
$Z_j=\Ex_{w_j} \theta Z_0$.
The contribution due to $K_j$ is bounded uniformly
in the volume using a cluster expansion, in Section~\ref{sec:cluster}.

The formula \refeq{lambdaflow} for $\lambda_{x,j}$
has a natural counterpart for  the nearest-neighbour 4-dimensional case,
with error term $O(g_j^2)$ instead of $O(\gLfix^2)$.  In that context,
$-\nu_j w_j^{(1)} +O(g_j^2) \to 0$ as $j \to \infty$.  This
provides insight into
the fact that
$\lim_{j\to\infty} \lambda_{x,j}=1$
in \cite[Lemma~\ref{saw4-lem:lamlim}]{BBS-saw4} and \cite[Corollary~\ref{phi4-cor:vx}]{ST-phi4}.
For the long-range model considered in Proposition~\ref{prop:lambda},
the non-Gaussian fixed point leads to a limit which is not equal to $1$.

\subsection{Integration by parts}
\label{sec:intbyparts}

For notational convenience we restrict attention to $n \ge 1$; small modifications
apply for $n=0$.
Recall that $\bar D$
and $D_{\sigma_a}$ are defined above Corollary~\ref{cor:W0}, and that $Z_j=\Ex_{w_j} \theta Z_0$.
Let
\begin{equation}
\lbeq{zdef}
        z_j=z_j(\Lambda) = e^{-\zeta_j}\frac{Z_j(\Lambda)}{I_j(\Lambda)},
        \quad\quad
        \Lcal_j=\Lcal_j(\Lambda) = \log z_j(\Lambda).
\end{equation}
Then we have
\begin{equation}
\lbeq{ZLcal}
Z_j =
e^{\zeta_j}I_j(\Lambda) z_j(\Lambda)
=
e^{\zeta_j}I_j(\Lambda) e^{\Lcal_j(\Lambda)}.
\end{equation}
The existence of the logarithm $\Lcal_j $ is discussed
in Section~\ref{sec:cluster}, where it is constructed as an element of a
Banach space $T_0(\ell_j)$ which only examines derivatives at zero field,
using a cluster expansion.
Bounds on $\Lcal_j $ and its derivatives at zero
field are proved in
Proposition~\ref{prop:Lcalbds} below.

\begin{lemma}
\label{lem:ibp}
The functions $I_j$ and $\Lcal_j$ are related by the identity
\begin{equation}
\lbeq{ibpLcal}
        \bar D D_{\sigma_a} I_j(\Lambda) + \bar D D_{\sigma_a} \Lcal_j(\Lambda)
        =
        1
        +
        \frac{1}{|\Lambda|} w_j^{(1)}
        \left[ \bar D^2 I_j^\varnothing (\Lambda) + \bar D^2 \Lcal^\varnothing_j(\Lambda)
        \right].
\end{equation}
\end{lemma}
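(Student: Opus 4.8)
The plan is to compute the scalar $\bar D D_{\sigma_a}Z_j(\Lambda)$ in two ways and match. On one hand $Z_j=\Ex_{w_j}\theta Z_0$ with $Z_0=e^{-V_0(\Lambda)}$, and Gaussian integration by parts together with translation invariance will produce the right-hand side of \refeq{ibpLcal}; on the other hand the RG-coordinate representation \refeq{ZLcal}, $Z_j=e^{\zeta_j}I_j(\Lambda)e^{\Lcal_j(\Lambda)}$, will produce the left-hand side. I abbreviate $V_0^\varnothing(\varphi)=\sum_{x\in\Lambda}V_0^\varnothing(\varphi_x)$, and I take for granted, as a black box from Section~\ref{sec:cluster}, that $\Lcal_j$ exists in $\Ncal$ and that the zero-field derivatives entering \refeq{ibpLcal} are well defined. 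I would first record three observations that collapse the algebra: (i) $D_{\sigma_a}\zeta_j=0$, since the only $\sigma$-dependence of $\zeta_j$ in \refeq{zetadef} is through $\sigma_a\sigma_b$, so the prefactor $e^{\zeta_j}$ contributes only the constant $e^{-u_j|\Lambda|}$ below; (ii) $\bar D$ annihilates every $\varnothing$-component, $\bar D I_j^\varnothing(\Lambda)=\bar D z_j^\varnothing(\Lambda)=\bar D\Lcal_j^\varnothing(\Lambda)=0$, because for $n\ge1$ each is $O(n)$-invariant hence even in $\varphi$ and so has vanishing first derivative at $\varphi=0$ (for $n=0$ one uses the analogous supersymmetry statement, as in \cite{BS-rg-IE}); (iii) $I_j^\varnothing(\Lambda)|_{\varphi=0}=1$, since $V_j^\varnothing$ carries no constant term ($V_j\in\Vcal$) and, by \refeq{Wdef} and Lemma~\ref{lem:Loc} applied to the constant test function (which has dimension $0\le d_+^\varnothing$), $W_j^\varnothing(V_j,B)|_{\varphi=0}=0$ for every block $B$.

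Starting from \refeq{ZLcal} and using the product rule for $\bar D$, observations (i)--(iii) give, with $z_j=e^{\Lcal_j}$ so that $z_j^\varnothing=e^{\Lcal_j^\varnothing}$ and $z_j^a=e^{\Lcal_j^\varnothing}\Lcal_j^a$,
\[
\bar D D_{\sigma_a}Z_j(\Lambda)=e^{-u_j|\Lambda|}\,z_j^\varnothing(\Lambda)|_{\varphi=0}\,\big[\bar D D_{\sigma_a}I_j(\Lambda)+\bar D D_{\sigma_a}\Lcal_j(\Lambda)\big],
\]
and, since $\bar D I_j^\varnothing(\Lambda)=\bar D\Lcal_j^\varnothing(\Lambda)=0$, a second application of the product rule gives
\[
\bar D^2 Z_j^\varnothing(\Lambda)=e^{-u_j|\Lambda|}\,z_j^\varnothing(\Lambda)|_{\varphi=0}\,\big[\bar D^2 I_j^\varnothing(\Lambda)+\bar D^2\Lcal_j^\varnothing(\Lambda)\big].
\]
The common prefactor equals $Z_j^\varnothing(\Lambda)|_{\varphi=0}$, which is nonzero; hence, after dividing by it, \refeq{ibpLcal} is equivalent to the single scalar identity
\[
\bar D D_{\sigma_a}Z_j(\Lambda)=Z_j^\varnothing(\Lambda)|_{\varphi=0}+\tfrac1{|\Lambda|}\,w_j^{(1)}\,\bar D^2 Z_j^\varnothing(\Lambda).
\]

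To prove this I would use $Z_j=\Ex_{w_j}\theta Z_0$. Applying $D_{\sigma_a}$ pulls down the factor $\varphi_a^1+\zeta_a^1$ inside the expectation, multiplying $e^{-V_0^\varnothing(\varphi+\zeta)}$; then $\bar D=\sum_x\partial_{\varphi_x^1}|_{\varphi=0}$ splits into two pieces. The piece in which $\partial_{\varphi_x^1}$ differentiates the explicit $\varphi_a^1$ sums to $\Ex_{w_j}[e^{-V_0^\varnothing(\zeta)}]=Z_j^\varnothing(\Lambda)|_{\varphi=0}$. In the other piece only $\zeta_a^1$ survives (as $\varphi_a^1|_{\varphi=0}=0$), and rewriting $(\partial_{\varphi_x^1}V_0^\varnothing)(\zeta)\,e^{-V_0^\varnothing(\zeta)}=-\partial_{\zeta_x^1}e^{-V_0^\varnothing(\zeta)}$ turns it into $\sum_x\Ex_{w_j}[\zeta_a^1\,\partial_{\zeta_x^1}e^{-V_0^\varnothing(\zeta)}]$; Gaussian integration by parts in $\zeta_a^1$ (covariance $w_j$, diagonal in the spin index) gives $\sum_x\sum_v(w_j)_{a,v}\,\Ex_{w_j}[\partial_{\zeta_v^1}\partial_{\zeta_x^1}e^{-V_0^\varnothing(\zeta)}]$. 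By translation invariance of $w_j$ and of $V_0^\varnothing(\Lambda)$, the last expectation depends on $v,x$ only through $v-x$; summing over $x\in\Lambda$ removes this dependence and factorises the double sum as $\big(\sum_v(w_j)_{a,v}\big)\cdot\tfrac1{|\Lambda|}\sum_{x,y}\Ex_{w_j}[\partial_{\zeta_x^1}\partial_{\zeta_y^1}e^{-V_0^\varnothing(\zeta)}]$. The first factor is $w_j^{(1)}$ by \refeq{w1def} (no wrap-around on the torus since $j<N$), and the second sum equals $\bar D^2 Z_j^\varnothing(\Lambda)$ because $Z_j^\varnothing(\varphi)=\Ex_{w_j}[e^{-V_0^\varnothing(\varphi+\zeta)}]$ depends on $\varphi$ only through $\varphi+\zeta$, so its $\varphi$-derivatives at $\varphi=0$ coincide with $\zeta$-derivatives of $e^{-V_0^\varnothing(\zeta)}$. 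Combining yields the scalar identity, hence \refeq{ibpLcal}.

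The step I expect to be the main obstacle is the bookkeeping precision in this last paragraph: confirming that Gaussian integration by parts followed by the translation-invariant collapse of the double sum produces exactly the coefficient $\tfrac1{|\Lambda|}w_j^{(1)}$, and that the prefactors extracted on the \refeq{ZLcal} side genuinely coincide with $Z_j^\varnothing(\Lambda)|_{\varphi=0}$ so that the final division is legitimate — which is precisely where observations (ii) and (iii) do their work. No step is conceptually hard, but these identities must be checked carefully.
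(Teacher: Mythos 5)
Your proposal is correct and takes essentially the same approach as the paper: you compute $\bar D D_{\sigma_a} Z_j$ by Gaussian integration by parts and translation invariance of $w_j$ and $V_0^\varnothing$, then match against the representation \refeq{ZLcal} after dividing by $Z_j^\varnothing|_{\varphi=0}$, using the same normalisation and symmetry observations ($I_j|_{\varphi=0}=1$ and the vanishing of first derivatives of $\varnothing$-components). The only cosmetic differences are the order in which $\bar D$ and the integration by parts are applied, and your use of $\bar D z_j^\varnothing = 0$ in place of the paper's $D_{\sigma_a} I_j|_{\varphi=0} = 0$; neither changes the argument.
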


\begin{proof}
By definition, followed by Gaussian integration by parts,
\begin{align}
        D_{\sigma_a} \Ex_{w_j} \theta Z_0
        & =  \Ex_{w_j} (\varphi_a^1+ \zeta_a^1) Z_0^\varnothing(\varphi +\zeta)
        \nnb &
        = \varphi_a^1 \Ex_{w_j} Z_0^\varnothing(\varphi +\zeta)
        + \sum_{y \in \Lambda} w_{j;a,y}\Ex_{w_j} \frac{\partial}{\partial\zeta_y^1}
        Z_0^\varnothing(\varphi +\zeta).
\end{align}
On the right-hand side, $\frac{\partial}{\partial\zeta_y^1}$ can be replaced by
$\frac{\partial}{\partial\varphi_y^1}$, and the latter commutes with the expectation.
Then application of $\bar D = \sum_{x\in\Lambda}\frac{\partial}{\partial\varphi_x^1}|_{\varphi=0}$
gives
\begin{align}
        \bar D D_{\sigma_a} \Ex_{w_j} \theta Z_0
        & =
        \Ex_{w_j}  Z_0^\varnothing
        + \sum_{y \in \Lambda} w_{j;a,y}
        \sum_{x \in \Lambda}
        \frac{\partial^2}{\partial\varphi_y^1\partial\varphi_x^1}
        \Big|_{\varphi=0}
        \Ex_{w_j} Z_0^\varnothing(\varphi+\zeta),
\end{align}
which by translation invariance and by definition of $Z_j$ is the same as
\begin{align}
        \bar D D_{\sigma_a} Z_j & =
        Z_j^\varnothing|_{\varphi=0}
        +   w_j^{(1)} \frac{1}{|\Lambda|} \bar D^2
        Z_j^\varnothing.
\lbeq{ddzj}
\end{align}
Now we divide both sides of \refeq{ddzj} by $Z_j^\varnothing|_{\varphi=0}$
and use \refeq{ZLcal}. Since $I_j|_{\varphi=0}=1$, and since
$\bar D Z^\varnothing_j=\bar D I_j|_{\sigma_a=\sigma_b=0} = D_{\sigma_a} I_j\vert_{\varphi=0} = 0 $
by symmetry, the result is \refeq{ibpLcal}.
\end{proof}

Note that the right-hand side of \refeq{ibpLcal} involves only bulk quantities,
while the left-hand side depends
on $\lambda_{a,j} $ through $I_j(\Lambda) $ and $D_{\sigma_a} \Lcal_j(\Lambda) $,
and also on the observable part of the irrelevant coordinate $K_j $ (through $ D_{\sigma_a}\Lcal_j(\Lambda) $).
For the explicit terms, we have the following identities.

\begin{lemma}
\label{lem:Ibds}
For $j \le N$ and $V \in \Vcal$,
\begin{equation}
        \bar D^2 I_j^\varnothing(\Lambda)
        =
        - \nu_j |\Lambda|,
\end{equation}
and if $j < j_{ab}$ then
\begin{equation}
\bar D D_{\sigma_a} I_j(\Lambda) = \lambda_{a,j}.
\end{equation}
\end{lemma}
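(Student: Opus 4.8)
The plan is to compute both derivatives directly from the explicit product formula $I_j(V,X)=e^{-V(X)}\prod_{B\in\Bcal_j(X)}(1+W_j(V,B))$ of \refeq{Idef}, using the Leibniz rule together with Corollary~\ref{cor:W0} to annihilate the contributions of the factors $W_j(V,B)$ at zero field. It is convenient to regard $\bar D$ as the differential operator $\sum_{x\in\Lambda}\partial/\partial\varphi_x^1$, so that $\bar D^2=\bar D\circ\bar D$, and to substitute $\varphi=0$ only at the end. Since $V=V_j\in\Vcal$, the polynomial $V(\Lambda)$ equals $V_j^\varnothing(\Lambda)-\sigma_a\lambda_{a,j}\varphi_a^1-\sigma_b\lambda_{b,j}\varphi_b^1$ with $V_j^\varnothing(\Lambda)=\sum_x(g_j\tau_x^2+\nu_j\tau_x)$, so $e^{-V(\Lambda)}=e^{-V_j^\varnothing(\Lambda)}\,e^{\sigma_a\lambda_{a,j}\varphi_a^1+\sigma_b\lambda_{b,j}\varphi_b^1}$.

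The one ingredient not already contained in Corollary~\ref{cor:W0} is that $W_j^\varnothing(V,x)$ has no constant part, i.e.\ $W_j^\varnothing(V,x)|_{\varphi=0}=0$. To see this, let $\1^0$ denote the test function supported on sequences of length $0$ and equal to the constant $1$, in analogy with the $\1^1,\1^2$ of Section~\ref{sec:Tphi}; then $W_j^\varnothing(V,x)|_{\varphi=0}=\langle W_j^\varnothing(V,x),\1^0\rangle_0$. By \refeq{Wdef} the $\varnothing$-component of $W_j(V,x)$ is of the form $\tfrac12(1-\loc_x^\varnothing)G$ for some $G\in\Ncal^\varnothing$, and $\1^0$ has dimension $0\le d_+^\varnothing$, hence lies in $\Pi^{d_+^\varnothing}$; since $\loc_x^\varnothing$ reproduces the pairing $\langle\cdot,g\rangle_0$ for every $g\in\Pi^{d_+^\varnothing}$, the pairing $\langle W_j^\varnothing(V,x),\1^0\rangle_0$ vanishes. (This is the $m=0$ case of Lemma~\ref{lem:Loc}.) Consequently $W_j^\varnothing(V,B)|_{\varphi=0}=0$, while $\bar D W_j^\varnothing(V,B)|_{\varphi=0}=\bar D^2 W_j^\varnothing(V,B)|_{\varphi=0}=0$ by Corollary~\ref{cor:W0}; expanding $\prod_B(1+W_j^\varnothing(V,B))$ by the Leibniz rule, it follows that this product equals $1$ at $\varphi=0$ and that all of its first and second $\bar D$-derivatives vanish there.

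For the first identity, write $I_j^\varnothing(\Lambda)=e^{-V_j^\varnothing(\Lambda)}\prod_B(1+W_j^\varnothing(V,B))$ and apply $\bar D^2$ via the Leibniz rule. A direct computation gives $\bar D V_j^\varnothing(\Lambda)|_{\varphi=0}=0$ (every term carries a field factor) and $\bar D^2 V_j^\varnothing(\Lambda)|_{\varphi=0}=\nu_j|\Lambda|$ (only the $\nu_j\tau_x$ terms contribute), hence $e^{-V_j^\varnothing(\Lambda)}|_{\varphi=0}=1$, $\bar D e^{-V_j^\varnothing(\Lambda)}|_{\varphi=0}=0$, and $\bar D^2 e^{-V_j^\varnothing(\Lambda)}|_{\varphi=0}=-\nu_j|\Lambda|$. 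Using the previous paragraph for the $W$-product factor, the cross term with $\bar D e^{-V_j^\varnothing(\Lambda)}$ and the term with $\bar D^2$ of the product both vanish at $\varphi=0$, so that $\bar D^2 I_j^\varnothing(\Lambda)=\bar D^2 e^{-V_j^\varnothing(\Lambda)}|_{\varphi=0}\cdot\prod_B(1+W_j^\varnothing(V,B))|_{\varphi=0}=-\nu_j|\Lambda|$.

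For the second identity I would assume $j<j_{ab}$ and extract the coefficient of $\sigma_a$ at $\sigma_b=0$. Writing $\prod_B(1+W_j(V,B))=P_\varnothing+\sigma_a P_a+\cdots$ at $\sigma_b=0$, with $P_\varnothing=\prod_B(1+W_j^\varnothing(V,B))$ and $P_a=\sum_B W_j^a(V,B)\prod_{B'\neq B}(1+W_j^\varnothing(V,B'))$ (where $W_j^a=D_{\sigma_a}W_j$), one finds $D_{\sigma_a}I_j(\Lambda)=e^{-V_j^\varnothing(\Lambda)}\bigl(\lambda_{a,j}\varphi_a^1 P_\varnothing+P_a\bigr)$. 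Applying $\bar D$ and setting $\varphi=0$: the prefactor $e^{-V_j^\varnothing(\Lambda)}$ contributes the factor $1$ and, since $\bar D e^{-V_j^\varnothing(\Lambda)}|_{\varphi=0}=0$, no additional field derivative; next $\bar D(\varphi_a^1 P_\varnothing)|_{\varphi=0}=P_\varnothing|_{\varphi=0}=1$ since $\bar D\varphi_a^1=1$ and $\varphi_a^1|_{\varphi=0}=0$; and $\bar D P_a|_{\varphi=0}=0$ because $\bar D W_j^a(V,B)|_{\varphi=0}=\bar D D_{\sigma_a}W_j(V,B)|_{\varphi=0}=0$ by the second part of Corollary~\ref{cor:W0} (valid since $j<j_{ab}$) together with $\bar D W_j^\varnothing(V,B)|_{\varphi=0}=0$. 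This yields $\bar D D_{\sigma_a}I_j(\Lambda)=\lambda_{a,j}$. The only genuinely delicate point is the no-constant-part fact for $W_j^\varnothing$ established in the second paragraph; everything else is routine Leibniz-rule bookkeeping in which each derivative that could survive at $\varphi=0$ is killed either by a field factor or directly by Corollary~\ref{cor:W0}.
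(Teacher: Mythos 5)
Your proof is correct and follows essentially the same route as the paper: differentiate the explicit product formula \refeq{Idef} by the Leibniz rule and kill all the $W$-contributions using Corollary~\ref{cor:W0}, together with the parity fact $\bar D V_j^\varnothing|_{\varphi=0}=0$ and the normalisation $I_j|_{\varphi=0}=1$. The one place you go beyond the paper's terse proof is in spelling out why $W_j^\varnothing(V,x)|_{\varphi=0}=0$ (the $m=0$ analogue of Lemma~\ref{lem:Loc}); the paper simply invokes $I_j|_{\varphi=0}=1$, which is equivalent, and your argument is a correct verification of that fact.
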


\begin{proof}
We differentiate the formula
$I_{j}(\Lambda) = e^{-V_j(\Lambda)}\prod_{B\in \Bcal_j(\Lambda)} (1+ W_j(V_j,B))$,
which is \refeq{Idef}.
We apply the product rule, Corollary~\ref{cor:W0}, and the facts that
$\bar D V_j^\varnothing =0$ and $I_j\vert_{\varphi=0}=1 $,
to obtain
\begin{align}
\lbeq{DDI}
\bar D^2 I_{j}^\varnothing (\Lambda)
&= -\nu_j\vert\Lambda\vert  + \sum_{B\in \Bcal _j(\Lambda)}\bar D^2 W_{j}^\varnothing(V_j,B)
 = -\nu_j \vert\Lambda\vert  .
\end{align}
Similarly, for $j < j_{ab}$, we also use
$D_{\sigma_a} V_j(\Lambda) = -\lambda_{a,j} \varphi^1_a$
to obtain
\begin{equation}
\lbeq{DDaI}
\bar D D_{\sigma_a} I_{j}(\Lambda)
    = \lambda_{a,j} + \sum_{B\in \Bcal_j(\Lambda)}\bar D  D_{\sigma_a} W_{j}(V_j,B)
    = \lambda_{a,j} ,
\end{equation}
and the proof is complete.
\end{proof}

We now state our bounds for the terms in \refeq{ibpLcal} involving $\Lcal_j$.
The hypothesis $(V_j,K_j)\in \domRG_j  $
of Proposition~\ref{prop:Lcalbds} will be established inductively.

\begin{prop}
\label{prop:Lcalbds}
Let $j \le j_{ab}$, let $\Lcal_j$ be defined as in \refeq{zdef}, and assume that
$Z_j = e^{\zeta_j}(I_j\circ K_j)$ with $I_j=I_j(V_j)$ and $(V_j,K_j) \in \domRG_j$.
Then there is a
constant $c_1>0$ such that
\begin{align}
\lbeq{DL}
|\bar D D_{\sigma_a} \Lcal_j(\Lambda)|  & \le c_1 \bar s^2,
\quad\quad
|\bar D^2 \Lcal_j^\varnothing (\Lambda) |    \le c_1 \vert\Lambda\vert L^{-\alpha j}\bar s^3.
\end{align}
\end{prop}

We defer the proof of Proposition~\ref{prop:Lcalbds} to Section~\ref{sec:cluster}.

\begin{proof}[Proof of Proposition~\ref{prop:lambda}]
The proof is by induction on $j$.
The statement of Proposition~\ref{prop:lambda} for $j=0$ is trivial.
Without loss of generality, we consider the case $x=a$.
We assume that we have \refeq{zzeta} for $Z_k$ with
$(V_k,K_k)$ constructed inductively using the RG map for $k \le j$,
and we make the constant in the hypothesis \refeq{lambdaflow} explicit by assuming that,
with $c_1$ from \refeq{DL},
\begin{equation}
\lbeq{lambdaflowM}
    |\lambda_{a,j}-1+\nu_jw_j^{(1)}| \le 2c_1 \gLfix^2.
\end{equation}
Then we have \refeq{zzeta}
with a pair of RG coordinates $(V_j,K_j)\in\domRG_j $, satisfying in addition
\refeq{lambdaflowM}.
Theorem~\ref{thm:step-mr} guarantees the existence of
RG coordinates
$(U_{j+1},K_{j+1})=(\delta \zeta_{j+1}=\delta u_{j+1},V_{j+1},K_{j+1})$
at scale $j+1$ such that $Z_{j+1}$ obeys \refeq{zzeta}, with
$U_{j+1}
= \PT_{j}(V_j) + R_{j+1}(V_j,K_j) $, and bounds on $R_{j+1}(V_j,K_j) $
and $K_{j+1} $ as in \refeq{Rmain-g}.

It has been proved in \cite{Slad17} that the bulk part of $V_{j+1} $ lies in
$\domRG_{j+1} $. The second bound in \refeq{Rmain-g} is sufficient to guarantee
that $K_{j+1} $ also lies in $\domRG_{j+1} $. Therefore, to complete the proof that
$(V_{j+1},K_{j+1})\in\domRG_{j+1} $, we only need to show that
$\vert \lambda_{j+1}\vert < C_{\Dcal} $, where $C_{\Dcal} >1 $ is the constant
in \refeq{DVdef-obs}. By \refeq{lambdapt2}, and by the first bound of \refeq{Rmain-g}
together with the definition of the norm in \refeq{Ucalnorm},
we have $ \lambda_{j+1} = (1+O(\bar s)) \lambda_j  + O(\bar s^2) $.
It now follows immediately from
\refeq{lambdaflowM} that
$0< \lambda_{j+1}  = 1+O(\bar s)<C_{\Dcal} $,
since $\bar s $ can be chosen small enough. This proves that
$(V_{j+1},K_{j+1})\in\domRG_{j+1} $.

To complete the induction, we must prove \refeq{lambdaflowM} with $j $ replaced by $j+1$.
Since \refeq{lambdaflow} is only required for scales below the coalescence scale,
we may assume
here that $j+1<j_{ab}$.
The bounds of
Proposition~\ref{prop:Lcalbds} at scale $j+1$ can be applied, since the hypothesis
$(V_{j+1},K_{j+1})\in\domRG_{j+1} $ has now been verified.
Also, the hypothesis $j+1<j_{ab}$ of Lemma~\ref{lem:Ibds} is satisfied.
We use Lemma~\ref{lem:Ibds}
in conjunction with \refeq{ibpLcal}, and apply
Proposition~\ref{prop:Lcalbds}.
This gives
\begin{equation}
    \vert \lambda_{a,j+1} -1 + \nu_{j+1}w_{j+1}^{(1)}\vert
    \leq
    c_1\gLfix^2 + c_1 w_{j+1}^{(1)}L^{-\alpha (j+1)} \gLfix^3
    \le
    2c_1 \gLfix^2,
\end{equation}
by \refeq{w1bd} and by taking $\gLfix$ sufficiently small.
This advances \refeq{lambdaflowM} to scale $j+1$, and completes the proof.
\end{proof}

\subsection{Cluster expansion}
\label{sec:cluster}

In this section, we use a cluster expansion to construct a formula for
$\Lcal_j = \log z_j $ and prove Proposition~\ref{prop:Lcalbds}.
Let $p(X) = K_j(X)/I_j(X)$.
By \refeq{zzeta}, \refeq{zdef}, and by definition of the circle product,
\begin{equation}
z_j =  I_j(\Lambda)^{-1} (I_j\circ K_j)(\Lambda)
    = \sum_{X\in \Pcal_j(\Lambda)} p(X),
\end{equation}
where the term in the sum with $X=\varnothing$ is interpreted as $1$.
In the sum, we decompose $X\in \Pcal_j$ into its connected components $X_1,\ldots,X_n \in
\Ccal_j$,
which may be labelled in $n!$ different ways.
For $X,X'\in \Ccal_j$, we set $g(X,X')=-1$ if $X$ and $X'$ touch,
and otherwise set $g(X,X')=0$.
Using the component factorisation property of $K_j$, we obtain
\begin{equation}
    z_j = \sum_{n= 0}^\infty \frac1{n!}
    \sum_{X_1,\ldots,X_n\in\Ccal_j}  p(X_1)\cdots p(X_n) \prod_{1 \le i<j \le n}(1+g(X_i,X_j)),
\end{equation}
where the $n=0$ term is again interpreted as $1$. This has the form of the partition function
of a polymer system, as defined, e.g., in \cite[(1)]{Uelt04}.
It is a standard result, e.g., \cite{Uelt04,FP07}, that $\log z_j$ can  be written as a cluster
expansion and accurately bounded, provided the polymer activities $p(X)$ obey suitable estimates.
In the following proof, we discuss this in detail and invoke a convergence criterion from \cite{Uelt04};
see also \cite{FV17,Salm99} for pedagogical introductions to the cluster expansion.
The verification of the criterion from \cite{Uelt04} is an almost immediate consequence
of the norm estimates in the definition of the domain $\domRG_j $.

Since we are interested only in the derivatives of $\Lcal_j $ at zero external and
observable fields, we do not construct $\Lcal_j $ as a function of these fields
(even though this would also be possible for suitably small fields),
but rather as a Taylor polynomial (jet) of order $p_{\Ncal} $ in the fields
around zero. In other words, we work on the quotient of $\Ncal $ by the
ideal of elements of $F\in \Ncal$ with $\|F\|_{T_0(\ell_j)} = 0$.
On this quotient, the $T_0(\ell_j)$ seminorm becomes a norm, and the
quotient becomes a finite-dimensional Banach algebra.
This is discussed in detail in \cite[Section~\ref{step-sec:Knorms}, Appendix~\ref{step-sec:Banach}]{BS-rg-step}.
We adopt the point of view in the following that we work in this normed space,
and write simply $\|\cdot\|$ for $\|\cdot \|_{T_0(\ell_j)}$.
Although the results of \cite{Uelt04} are stated for complex-valued $p(X)$,
the proofs hold verbatim for values in any Banach algebra.
The completeness of the Banach algebra is important for the existence
of $\Lcal_j = \log z_j$, which is defined in terms of an infinite sum.

The estimates we use, for $(V_j,K_j) \in \domRG_j$
and $X \in \Ccal_j$, are:
\begin{equation}
    \|1/I_j(X)\| \le 2^{|X|},
    \quad\quad
    \|K_j(X)\| \le M \gLfix^{3+ a(|X|-2^d)_+},
\end{equation}
where $a>0$ is small; here, $|X| = |X|_j$ denotes the number of $j$-blocks in $X$.
The bound on $I_j^{-X}$ is a small adaptation of \cite[Proposition~\ref{IE-prop:Istab}]{BS-rg-IE}
to our long-range setting, and the bound on $K_j$ follows from the definition of the $\Wcal_j$
norm and \refeq{domRG}.
Absorbing the factor $2^{|X|}$ by replacing $M$ by $M'>M$, replacing
$a$ by $a'\in (0,a)$, and using the fact that $\gLfix$ is sufficiently small,
we conclude that the polymer activity obeys the bound
\begin{equation}
\lbeq{pXbd}
    \|p(X)\| \le M' \gLfix ^{3+ a'(|X|-2^d)_+}.
\end{equation}
The following lemma uses this bound and will be employed to verify the hypothesis of \cite[Theorem~1]{Uelt04}.

\begin{lemma}
\label{lem:cluster-hyp}
If $B\in\Bcal_j$, then for $\gLfix$ sufficiently small (depending only on $d$)
\begin{equation}
\sum_{Y\in\Ccal_j} |g(B, Y)| \|p(Y)\| e^{|Y|}
    \le
O(\gLfix^3),
\end{equation}
where the constant depends on $d$.
\end{lemma}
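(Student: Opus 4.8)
The plan is to split the sum according to the number of $j$-blocks in $Y$, and to combine the activity bound \refeq{pXbd} with the standard exponential bound on the number of connected polymers of a given size anchored near $B$. Recall that $g(B,Y)\neq 0$ only when $B$ and $Y$ touch, in which case, in terms of the coarse ($j$-block) lattice with its $\ell^\infty$-adjacency, $Y$ is a connected polymer meeting the bounded set of blocks at coarse $\ell^\infty$-distance at most $1$ from $B$. I would use two elementary facts: (i) for each $n\ge 1$ the number of connected $j$-polymers $Y$ with $|Y|=n$ that touch $B$ is at most $c_d\rho_d^{\,n}$ for constants $c_d,\rho_d$ depending only on $d$ (the usual lattice-animal count, cf.\ the references \cite{Uelt04,FP07,FV17} already invoked); and (ii) by \refeq{pXbd}, $\|p(Y)\|\le M'\gLfix^{3}$ when $|Y|\le 2^d$ and $\|p(Y)\|\le M'\gLfix^{3+a'(|Y|-2^d)}$ when $|Y|> 2^d$.

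For the small polymers, those with $|Y|\le 2^d$ that touch $B$ are finite in number, say at most $N_d$ (depending only on $d$), and each contributes at most $M'\gLfix^3 e^{2^d}$; hence their total contribution is $N_d e^{2^d} M'\gLfix^3 = O(\gLfix^3)$. For the polymers with $n:=|Y|> 2^d$, combining (i), (ii), and $e^{|Y|}=e^n$ gives
\begin{equation}
\sum_{n> 2^d}\ \sum_{\substack{Y\in\Ccal_j,\ |Y|=n\\ Y \text{ touches } B}} |g(B,Y)|\,\|p(Y)\|\,e^{|Y|}
\ \le\
c_d M'\gLfix^{3}\sum_{n> 2^d} (\rho_d e)^{n}\,\gLfix^{a'(n-2^d)}.
\end{equation}
Writing $m=n-2^d\ge 1$, the right-hand side equals $c_d M'\gLfix^3(\rho_d e)^{2^d}\sum_{m\ge 1}(\rho_d e\,\gLfix^{a'})^{m}$, which for $\gLfix$ small enough that $\rho_d e\,\gLfix^{a'}<\tfrac12$ (a condition depending only on $d$) is a convergent geometric series of size $O(\gLfix^{3+a'})=O(\gLfix^3)$. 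Adding the two contributions yields $\sum_{Y\in\Ccal_j}|g(B,Y)|\,\|p(Y)\|\,e^{|Y|}\le O(\gLfix^3)$, with the implied constant depending only on $d$, as claimed.

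The only step needing any care — and the closest thing here to an obstacle — is fact (i), the combinatorial count of connected $j$-polymers of a prescribed size that touch the fixed block $B$; but this is entirely standard in the polymer-expansion literature, and the smallness of $\gLfix$ is precisely what is needed for the power $\gLfix^{a'}$ to beat the exponential growth $\rho_d e$ coming from the animal count and the weight $e^{|Y|}$. Everything else is bookkeeping with \refeq{pXbd}.
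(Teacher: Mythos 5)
Your proof is correct and follows essentially the same route as the paper's: both split the sum into $|Y|\le 2^d$ and $|Y|>2^d$, bound the number of connected $j$-polymers of size $n$ touching $B$ by an exponential in $n$, invoke \refeq{pXbd}, and close the large-$n$ tail with a geometric series once $\gLfix$ is small enough that the factor $\gLfix^{a'}$ beats the exponential growth.
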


\begin{proof}
The number of connected polymers $Y \in\Ccal_j$ that touch a block $Y$
and have size $|Y|=n$ is at most $A^n$ for some $d$-dependent constant $A$.
Thus,
\begin{equation}
\sum_{Y\in\Ccal_j} |g(B, Y)| \|p(Y)\| e^{|Y|}
    \le
M' \gLfix^3 \sum_{n=1}^{|\Lambda|} (Ae)^n \gLfix ^{a' (n-2^d)_+}.
\end{equation}
We split the sum on the right-hand side into sums with $n \le 2^d$ and $n > 2^d$.
The first of these is a constant
that depends only on $d$. Taking $\gLfix$ small so that $A e \gLfix^{a'} < 1$,
the second sum is bounded as
\begin{equation}
\gLfix^{-a' 2^d} \sum_{n=2^d+1}^{|\Lambda|} (A e \gLfix^{a'})^n
    = O(\gLfix^{a'})
\end{equation}
with a $d$-dependent constant, which suffices.
\end{proof}

\begin{proof}[Proof of Proposition~\ref{prop:Lcalbds}]
By Lemma~\ref{lem:cluster-hyp}, if $\gLfix$ is sufficiently small, then
\begin{equation}
\sum_{Y\in\Ccal_j} |g(X, Y)| \|p(Y)\| e^{|Y|}
    \le \sum_{B\in\Bcal_j(X)}
    \sum_{Y \in \Ccal_j}
    |g(B, Y)| \|p(Y)\| e^{|Y|}
    \le |X|,
\end{equation}
which verifies the hypothesis \cite[(3)]{Uelt04} with $a(A)=|A|$.
Also by Lemma~\ref{lem:cluster-hyp},
\begin{equation}
\label{e:pexp-bd}
\sum_{Y \in \Ccal_j}\|p(Y)\|e^{|Y|}
    \le \sum_{B\in\Bcal_j} \sum_{Y \in \Ccal_j} |g(B, Y)| \|p(Y)\| e^{|Y|}
    \le O(
    L^{(N-j)d}
    \gLfix^3) < \infty,
\end{equation}
which verifies the other hypothesis of \cite[Theorem~1]{Uelt04}.

Let $u(X_1,\ldots,X_n)$ denote the Ursell function, defined in \cite[(2)]{Uelt04}
(with $g$ written as $\zeta$).
We conclude from \cite[Theorem~1]{Uelt04} that $\Lcal_j$ is given by the
absolutely convergent sum
\begin{equation}
\label{e:cexpL}
\Lcal_j =
    \sum_{n=1}^\infty
    \sum_{X_1,\ldots,X_n\in\Ccal_j(\Lambda)}
    p(X_1)\cdots p(X_n)u(X_1,\ldots,X_n)  ,
\end{equation}
and that for all $X_1\in \Ccal_j$ we have
\begin{equation}
\lbeq{cebd}
    \sum_{n=1}^\infty n
    \sum_{X_2,\ldots,X_n\in \Ccal_j}
    \|p(X_2)\| \cdots \|p(X_n)\|
    |u(X_1,\ldots, X_n)|
        \le
    e^{|X_1|},
\end{equation}
with the $n=1$ term in \refeq{cebd} interpreted as $1$.

By \refeq{cexpL}--\refeq{cebd} (the factor $n$ in \refeq{cebd} is not needed)
and \eqref{e:pexp-bd},
\begin{equation}
\lbeq{Lcalbd}
    \Vert \Lcal_j^\varnothing\Vert
    \le
    \sum_{X_1\in\Ccal_j} \|p(X_1)\| e^{|X_1|} = O(L^{(N-j)d} \gLfix^3).
\end{equation}
Similarly, for $\Vert D_{\sigma_a} \Lcal_j \Vert$,
we use the product rule for differentiation, this time using the factor
$n$ (due to the product rule) in \refeq{cebd}.  With the definition of the
$T_0$ seminorm in \refeq{Tphiobs} and of $\ell_{\sigma,j}$ in \refeq{weightsigma}
for $j \le j_{ab}$, we obtain
\begin{equation}
\lbeq{Lcalabd}
    \Vert D_{\sigma_a} \Lcal_j \Vert
    \le
    \sum_{X\in\Ccal_j : X \ni a} \|D_{\sigma_a} p(X)\| e^{|X|}
    =
    \sum_{X\in\Ccal_j : X \ni a} \ell_{\sigma,j}^{-1} \|  p(X)\| e^{|X|}
    = O(\ell_j \gLfix^2).
\end{equation}
For $F \in \Ncal$, $\bar D F = \pair{F,\1^1}_0$
and $\bar D^2 F =\pair{F,\1^{2}}_0$, with the test functions $\1^1,\1^2$ of
Lemma~\ref{lem:Loc}.
These two test functions have $\Phi_j$-norms
(as defined, e.g., in \cite[\eqref{alpha-e:Phinorm}]{Slad17})
\begin{equation}
    \|\1^1\|_{\Phi_j} = \ell_j^{-1}, \quad\quad  \|\1^{2}\|_{\Phi_j} = \ell_j^{-2}.
\end{equation}
Therefore, for $m=1,2$,
\begin{equation}
    |\bar D^m F| \le \|F \Vert
    \ell_j^{-m}
    .
\end{equation}
In particular, since $L^{Nd}=|\Lambda|$,
\begin{align}
    |\bar D D_{\sigma_a} \Lcal_j|
    &
    \le \Vert D_{\sigma_a} \Lcal_j \Vert
    \ell_j^{-1} = O(\gLfix^2),
    \\
    |\bar D^2 \Lcal_j^\varnothing|
    &
    \le
    \Vert \Lcal_j^\varnothing \Vert
    \ell_j^{-2}
    = O(|\Lambda| L^{-\alpha j} \gLfix^3),
\end{align}
and the proof is complete.
\end{proof}

\section{Full RG flow and proof of Theorem~\ref{thm:2ptfcn}}
\label{sec:qflow}

In Proposition~\ref{prop:lambda}, the
RG flow $(\zeta_j,V_j,K_j) $ is constructed for scales $j\leq j_{ab} $.
The sequence $\zeta_j$ of \refeq{zetadef} contains in particular the coupling constants
$q_{a,j},q_{b,j}$; recall that $q_{x,j}=0$ for $j \le j_{ab}$.
In Section~\ref{sec:q}, we apply Theorem~\ref{thm:step-mr} inductively to continue the
RG flow $(\zeta_j,V_j,K_j) $ to scales $j_{ab} < j \le N $.
Using the extended flow, we prove Theorem~\ref{thm:2ptfcn} in Section~\ref{sec:2pt}.
The analysis proceeds as in \cite{BBS-saw4,ST-phi4}.

Once the RG flow has been extended to all scales, the combination of \refeq{ZN} and
\refeq{zzeta} gives, at the final scale $j=N$, the representation
\begin{equation}\label{e:zN}
    \Ex_C  e^{- V_{0} (\Lambda)} = Z_N \big|_{\varphi=0}
     = e^{\zeta_N}(I_N(\Lambda) + K_N(\Lambda) )\big|_{\varphi=0}.
\end{equation}
From this, we apply \refeq{corrdiff} to to calculate the two-point function as
\begin{align}
\label{e:DaDbPN}
    G_{a,b,N}(g,\nu;n)
    &=
    D_{\sigma_a\sigma_b}^2 \log \Ex_C e^{- V_{0} (\Lambda)}
    = \frac{1}{2}(q_{a,N} + q_{b,N}) + A_N,
\end{align}
with
\begin{align}
    A_N &=
    \frac{D^2_{\sigma_a\sigma_b}K_{N}}{1 +  K_{N}^\varnothing}\Big|_{\varphi=0}
    - \frac{\left(D_{\sigma_a}K_{N}\right)
    \left(D_{\sigma_b}K_{N}\right)}{(1 +  K_{N}^\varnothing)^2}\Big|_{\varphi=0}
    .
\end{align}

\subsection{Flow of \texorpdfstring{$q$}{q}}
\label{sec:q}

The next proposition states that the RG flow exists for scales $j_{ab}\leq j\leq N $, and in particular analyses the flow of $q$ and establishes control on the terms of the right-hand side
of \eqref{e:DaDbPN}, which is needed to prove Theorem~\ref{thm:2ptfcn}.

\begin{prop}
\label{prop:qN}
Let $n \ge 0$, let $L$ be sufficiently large,
let $\Lambda_N$ be the torus of period $L^N$,
and let $\epsilon$ be sufficiently small.
Let $g \in [\frac{63}{64}\gLfix,\frac{65}{64}\gLfix]$, and let
$m^2 \in [L^{-\alpha(N-1)}, \delta]$ with $\delta >0$ sufficiently small.
Suppose that $j_{ab}<j_m$.
Starting with $(V_{j_{ab}},K_{j_{ab}})$ produced by Proposition~\ref{prop:lambda},
the RG map can be iterated to scale $N$, i.e., it produces
a sequence $(V_j,K_j)\in \domRG_j$ such that \refeq{zzeta} holds
for all $j \le N$ with $I_j=I_j(V_j)$ and $\zeta_j = \sum_{k=1}^j \delta\zeta_j$.
The $q_{x,j}$ component of $\zeta_j$ is given by
\begin{equation}\label{e:q}
q_{x,j} = \lambda_{a, j_{ab}} \lambda_{b, j_{ab}}  w_{j; a,b} + \sum_{i = j_{ab}}^{j - 1} r_{x,i}
    \quad\quad
(x = a, b)
\end{equation}
with
\begin{equation}
\lbeq{rqbd}
|r_{x,i}| \le
O(\gLfix) \frac{1}{|a-b|^{d-\alpha}}
4^{-(i-j_{ab})_+}.
\end{equation}
Moreover,
\begin{align}
\lbeq{ANlim}
\lim_{N \to \infty} A_N=0.
\end{align}
\end{prop}

\begin{proof}
For $j=j_{ab} $, we have $(V_{j_{ab}},K_{j_{ab}})\in \domRG_{j_{ab}}$
by Proposition~\ref{prop:lambda}.
Also, \refeq{q}--\refeq{rqbd} hold trivially, since $r_{x,j_{ab}}=0$
by Theorem~\ref{thm:step-mr} and hence
$q_{x,j_{ab}}=\lambda_{a, j_{ab}} \lambda_{b, j_{ab}}  w_{j; a,b}$ by \refeq{qpt2}.

We fix $j \ge j_{ab}$ and assume inductively that
\refeq{zzeta} holds with a pair of RG coordinates $(V_j,K_j)\in \domRG_j  $
and that \refeq{q}--\refeq{rqbd} hold.
 As in the proof of
Proposition~\ref{prop:lambda}, Theorem~\ref{thm:step-mr} guarantees the existence
of RG coordinates $(V_{j+1},K_{j+1}) $ at scale $j+1$, with
$V_{j+1} = \PT_{j+1}(V_j) + R_{j+1}(V_j,K_j) $, and bounds on
$R_{j+1}(V_j,K_j) $ and $K_{j+1} $ as in \refeq{Rmain-g}.

As before, it has been proved in \cite{Slad17} that the bulk part of $V_{j+1} $ lies in
$\domRG_{j+1} $. The coordinate $\lambda_{a,j}=\lambda_{a,j_{ab}} $ remains constant for
$j> j_{ab} $, and thus still lies in $\domRG_{j+1} $. As before, the second bound
in \refeq{Rmain-g} is sufficient to guarantee that $K_{j+1} $ also lies in $\domRG_{j+1} $.
This shows that $(V_{j+1},K_{j+1})\in\domRG_{j+1} $.

We now show that $q_{a,j+1} $ satisfies \refeq{q} at scale $j+1$ and that \refeq{rqbd} holds.
Using \refeq{q} and \refeq{qpt2}, and denoting by $r_{a,j} $ the
component of $R_{j+1}(U_j,K_j) $ corresponding to the component $q_a$, we see that
\begin{equation*}
q_{a,j+1} = q_{a,j} + \lambda_{a,j_{ab}}\lambda_{b,j_{ab}} C_{j+1;a,b} + r_{a,j}
=  \lambda_{a, j_{ab}} \lambda_{b, j_{ab}}  w_{j+1; a,b} + \sum_{i = j_{ab}}^{j} r_{x,i},
\end{equation*}
verifying \refeq{q} at scale $j+1$.
By definition of the norm in \refeq{Ucalnorm}
and by our assumption that $j_{ab}<j_m$, Theorem~\ref{thm:step-mr} gives the bound
\begin{align}
\lbeq{goodrq}
r_{x,j} \le
\ell_{\sigma,j+1}^{-2} \|R_{j+1}\|
\le
1_{j \ge j_{ab}}
\ell_{\sigma,j+1}^{-2}
O(\gLfix^3)
& =
1_{j \ge j_{ab}}
L^{- j_{ab}(d-\alpha)}
4^{-(j-j_{ab})_+}O(\gLfix )
,
\end{align}
which proves \refeq{rqbd} since $L^{-j_{ab}(d-\alpha)}=O(|a-b|^{-(d-\alpha)})$
by \refeq{jabbds}.

Finally, we write $D^k_\sigma$ to mean
no derivative for $k=0$, the derivative with respect to $\sigma_a$ or $\sigma_b$
for $k=1$, and the second derivative with respect to $\sigma_a,\sigma_b$ for $k=2$.
Since $(V_N,K_N)\in\domRG_N$, it follows
from \refeq{Rmain-g}, with the fact that the $\Wcal_N$
norm bounds the $T_0(\ell_N)$ norm, that
\begin{align}
\label{e:Kg1}
    |D_{\sigma}^l K_{N}(\Lambda)|_{\varphi=0}|
    &
    \le
    \ell_{\sigma,N}^{-l}\CRG \chicCov_N^3 \gLfix^3
    \le
    O(\gLfix^{3-l})
    \chicCov_N^3
    \left( 2^{-(N-j_{ab})_+}
    L^{-\frac 12 j_{ab}(d-\alpha)}
    \right)^l
    .
\end{align}
Since $\chicCov_N \to 0$ as $N \to \infty$, this implies \refeq{ANlim} and completes the proof.
\end{proof}

\subsection{Proof of Theorem~\ref{thm:2ptfcn}}
\label{sec:2pt}

With \refeq{DaDbPN} and Proposition~\ref{prop:qN}, it is now straightforward to complete
the proof of our main result. In the proof, we write
$C_{a,b}$ for the massless free two-point function
$((-\Delta)^{\alpha/2})^{-1}_{a,b}$ on $\Zd$.  According to \refeq{resolventasy},
$C_{a,b} \asymp |a-b|^{-(d-\alpha)}$. The proof uses the following lemma.

\begin{lemma}
\label{lem:mcts}
Let $\nu_0 = \nu_0^c(m^2)$. Then for any $j < N$ the map $m^2 \mapsto (V_j, K_j)$
is continuous for $m^2 \in [0, L^{-\alpha j}]$. Moreover, the sequence $V_j$ is
independent of $N$. In particular, for any $j < \infty$, the maps $V_j, K_j, R_{+,j}$
depend continuously on $m^2$ at $m^2 = 0$.
\end{lemma}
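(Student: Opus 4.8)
The plan is an induction on the scale $j$, starting from the continuity of the mass-dependent critical value $\nu_0^c(m^2)$ and bootstrapping with the joint continuity of the RG map supplied by Theorem~\ref{thm:step-mr}. For the base case $j=0$, the coordinate $V_0$ is given by \refeq{V0Z0} with $g_0=g$ fixed and $\nu_0=\nu_0^c(m^2)$, while $K_0=\1_\varnothing$ is constant; since $m^2\mapsto\nu_0^c(m^2)$ is continuous on $[0,\delta]$ with $\nu_0^c(0)=\nu_c$ by the construction of \cite{Slad17} (see \cite[\eqref{alpha-e:nuceta}--\eqref{alpha-e:nuc}]{Slad17}), the map $m^2\mapsto(V_0,K_0)$ is continuous, and $V_0$ manifestly does not depend on $N$.

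For the inductive step, fix $j$ with $j+1<N$, and assume that $m^2\mapsto(V_j,K_j)$ is continuous on $[0,L^{-\alpha j}]$ with $(V_j,K_j)\in\domRG_j$ for each such $m^2$. The latter inclusion holds by \cite{Slad17} for the bulk coordinate and by Propositions~\ref{prop:lambda} and~\ref{prop:qN} for the observable coordinates; their arguments apply at $m^2=0$ as well for scales below $N$, since $C_1,\dots,C_j$ and $\nu_0^c$ are all defined there and only $C_{N,N}$ requires $m^2>0$. Because $j+1<N$ we have $\Iint_+=[0,\delta]$, so Theorem~\ref{thm:step-mr} asserts that the scale-$j$ RG map, which produces $(V_{j+1},K_{j+1})$ and whose nonperturbative part is $R_{+,j}$, is jointly continuous in $(m^2,V,K)$ for $m^2\in[0,L^{-\alpha j}]$ (using also that $V_{j+1}=\PT_j(V_j)+R_{+,j}(V_j,K_j)$ with $\PT_j$ explicit and polynomial in the covariance $C_{j+1}$). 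Composing with the induction hypothesis gives continuity of $m^2\mapsto(V_{j+1},K_{j+1})$ on $[0,L^{-\alpha j}]$, which contains the interval $[0,L^{-\alpha(j+1)}]$ claimed by the lemma. The $N$-independence of $V_{j+1}$ then follows because the bulk part of the RG map is constructed in \cite{Slad17} as an $N$-independent sequence, while the observable components $\lambda_{a,j+1},\lambda_{b,j+1}$ depend only on the $N$-independent quantities $\nu_j,w^{(1)}$ through $\lambdapt$ in \refeq{lambdapt2} and on $\hat V$ in \refeq{Vhatdef}, which modifies $V_j$ solely via $\Loc$ applied to the local quantities $I^{-Y}K(Y)$ over small sets $Y$; for $N$ large these are unaffected by the torus identifications — this is the general property recorded in \cite[\eqref{step-e:piVKplus}--\eqref{step-e:plusindep}]{BS-rg-step}. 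No analogous statement holds, or is needed, for $K_j$.

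Finally, for fixed $j<\infty$ and $N$ large, one more application of the joint continuity in Theorem~\ref{thm:step-mr} at scale $j$ to the continuous curve $m^2\mapsto(V_j,K_j)$ constructed above shows that $m^2\mapsto V_j$, $m^2\mapsto K_j$, and $m^2\mapsto R_{+,j}(V_j,K_j)$ are each continuous at $m^2=0$. The step that needs the most care is the bookkeeping of the nested mass intervals: Theorem~\ref{thm:step-mr} furnishes continuity of the scale-$j$ RG map only on $[0,L^{-\alpha j}]$, which shrinks as $j$ grows, so one must track that the compositions in the induction all take place on a common interval — this causes no difficulty because only a finite scale $j$ is ever fixed. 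The secondary point to check is that the flow remains in $\domRG_j$ for every $m^2\in[0,L^{-\alpha j}]$, including $m^2=0$, so that the continuity clause of Theorem~\ref{thm:step-mr} is genuinely applicable there; this is exactly what \cite{Slad17} together with Propositions~\ref{prop:lambda} and~\ref{prop:qN} guarantee.
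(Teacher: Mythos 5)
Your proposal follows essentially the same route as the paper's proof: induction on $j$, base case from the mass continuity of $\nu_0^c(m^2)$ established in \cite{Slad17}, inductive step via the joint continuity in $(m^2,V,K)$ supplied by Theorem~\ref{thm:step-mr} (together with the continuity of the covariances entering $\PT_j$), and $N$-independence of $V_j$ cited from \cite{BS-rg-step}. The paper is more terse and cites a specific proposition of \cite{BS-rg-step} for the $N$-independence rather than re-deriving it as you do, but the ingredients and logical structure are the same.
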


\begin{proof}
We show by induction that $(V_j, K_j)$ depends continuously on
$m^2 \in [0, L^{-\alpha j}]$. The case $j = 0$ follows from
\cite[\eqref{alpha-e:nu0}]{Slad17} and \cite[Corollary~\ref{alpha-cor:mu0}]{Slad17}.
Now suppose the inductive hypothesis holds for some $j \ge 0$. Then the case $j + 1$
follows from \eqref{e:lambdapt2}, \cite[Lemma~\ref{alpha-lem:wlims}]{Slad17},
and Theorem~\ref{thm:step-mr}. The fact that $V_j$ is independent of $N$ is
\cite[Proposition~\ref{step-prop:VZd}]{BS-rg-step}.
\end{proof}

\begin{proof}[Proof of Theorem~\ref{thm:2ptfcn}]
We first take the limit $N \to \infty$,
then take the limit $m^2 \downarrow 0$,
and finally consider large $|a-b|$.
By \refeq{q} with $j=N$,
\begin{equation}
    q_{x,N}
    =
    \lambda_{a, j_{ab}} \lambda_{b, j_{ab}}  w_{N; a,b} + \sum_{i = j_{ab}}^{N - 1} r_{x,i}
    .
\end{equation}
By Proposition~\ref{prop:qN},
the remainder term is bounded uniformly in $N$ and
in $m^2 \in [L^{-\alpha (N-1)},\delta]$ by
\begin{equation}
    \Big| \sum_{i = j_{ab}}^{N - 1} r_{x,i} \Big|
    \le
    O(\gLfix) \frac{1}{|a-b|^{d-\alpha}}
    \le
    O(\gLfix) C_{a,b}.
\end{equation}
By dominated convergence, and by the continuity of $r_{x,i}$
(a component of $R_{i+1}$) at $m^2=0$ guaranteed by Lemma~\ref{lem:mcts},
$\lim_{m^2 \downarrow 0}\lim_{N \to \infty}\sum_{i = j_{ab}}^{N - 1} r_{x,i}$ exists
and is bounded by $O(\gLfix) C_{a,b}$.
For the main term, since $\lambda_{j_{ab}}=1+O(\gLfix)$ by Proposition~\ref{prop:lambda},
it follows from the definition of $w_N$ in \refeq{wjdef} (together with the fact
that that the covariance appearing in \refeq{qpt2} is always the infinite-volume one), that
\begin{align}
    \lim_{m^2 \downarrow 0}\lim_{N\to\infty}
    \lambda_{j_{ab}}^2  w_{N,a,b} = (1+O(\gLfix)) C_{a,b}
    .
\end{align}
The existence of the above limit as $m^2 \downarrow 0$ is a consequence of the
fact that $w_{\infty,a,b} = ((-\Lambda)^{\alpha/2}+m^2)^{-1}_{ab} \to C_{a,b}$,
together with the mass continuity of $\lambda_{j_{ab}}$, which follows from
Lemma~\ref{lem:mcts}.
We apply \refeq{ANlim} in \refeq{DaDbPN}, and find that
the critical two-point function obeys
\begin{equation}
        G_{a,b} =
        \lim_{N \to \infty} G_{a,b,N} =
        \frac{1}{2}(q_{a,\infty} + q_{b,\infty})
        =
        (1 +O(\gLfix)) C_{a,b}
        +O(\gLfix) C_{a,b} = (1 +O(\gLfix)) C_{a,b} .
\end{equation}
This completes the proof.
\end{proof}


\section*{Acknowledgements}
This work was supported in part by NSERC of Canada.
We thank Slava Rychkov for helpful correspondence, and an anonymous referee
for useful suggestions.


\end{document}